\begin{document}
\newcommand{\hrho}{\widehat{\rho}}
\newcommand{\hsigma}{\widehat{\sigma}}
\newcommand{\homega}{\widehat{\omega}}
\newcommand{\hI}{\widehat{I}}
\newcommand*{\spr}[2]{\langle #1 | #2 \rangle}
\newcommand*{\bbN}{\mathbb{N}}
\newcommand*{\bbR}{\mathbb{R}}
\newcommand*{\cB}{\mathcal{B}}
\newcommand*{\barpi}{\overline{\pi}}
\newcommand*{\barP}{\overline{P}}
\newcommand*{\eps}{\varepsilon}
\newcommand*{\id}{I}
\newcommand{\orho}{\overline{\rho}}
\newcommand{\omu}{\overline{\mu}}
\newcommand*{\half}{{\frac{1}{2}}}
\newcommand*{\ket}[1]{| #1 \rangle}
\newcommand{\trho}{{\widetilde{\rho}}_n^\gamma}
\newcommand{\ttrho}{{\widetilde{\rho}}}
\newcommand{\tsigma}{{\widetilde{\sigma}}}
\newcommand{\tpi}{{\widetilde{\pi}}}
\newcommand*{\bra}[1]{\langle #1 |}
\newcommand*{\proj}[1]{\ket{#1}\bra{#1}}
\newcommand{\otrho}{{\widetilde{\rho}}_n^{\gamma 0}}
\newcommand{\be}{\begin{equation}}
\newcommand{\bea}{\begin{eqnarray}}
\newcommand{\eea}{\end{eqnarray}}
\newcommand{\tr}{\mathrm{Tr}}
\newcommand*{\Hmin}{H_{\min}}
\newcommand{\rank}{\mathrm{rank}}
\newcommand{\tends}{\rightarrow}
\newcommand{\uS}{\underline{S}}
\newcommand{\oS}{\overline{S}}
\newcommand{\uD}{\underline{D}}
\newcommand{\oD}{\overline{D}}
\newcommand{\ee}{\end{equation}}
\newcommand{\supp}{\rm{supp}}
\newcommand{\n}{{(n)}}
\newtheorem{definition}{Definition}
\newtheorem{theorem}{Theorem}
\newtheorem{proposition}{Proposition}
\newtheorem{lemma}{Lemma}
\newtheorem{defn}{Definition}
\newtheorem{corollary}{Corollary}
\newcommand{\qed}{\hspace*{\fill}\rule{2.5mm}{2.5mm}}

\newenvironment{proof}{\noindent{\it Proof}\hspace*{1ex}}{\qed\medskip}
\def\reff#1{(\ref{#1})}
%


\title{Min- and Max- Relative Entropies and a New Entanglement Monotone}


\author{Nilanjana Datta}
\email{N.Datta@statslab.cam.ac.uk}
\affiliation{Statistical Laboratory, DPMMS, University of Cambridge, Cambridge CB3 0WB, UK}

\date{\today}

\begin{abstract}
Two new relative entropy quantities, called the {\em{min- and max-relative 
entropies}},
are introduced and their properties are investigated. The well-known min- and 
max- entropies, introduced by Renner \cite{renatophd}, are obtained 
from these. We 
define a new entanglement monotone, which we refer to as the 
{\em{max-relative entropy of entanglement}}, and which is an upper bound to the relative entropy of entanglement. We also
generalize the min- and max-relative 
entropies to obtain {\em{smooth min-
and max- relative entropies}}. These act as parent quantities for the {\em{smooth
R\'enyi entropies}} \cite{renatophd}, and allow us to define the analogues of the 
mutual information,
in the Smooth R\'enyi Entropy framework. Further, the spectral divergence rates of
the Information Spectrum approach are shown to be obtained from the smooth min- and max-relative
entropies in the asymptotic limit.
\end{abstract}

\pacs{03.65.Ud, 03.67.Hk, 89.70.+c}
\bigskip

\noindent
\keywords{quantum relative entropy, smooth R\'enyi entropies, spectral divergence rates, information spectrum, entanglement monotone}

\maketitle


\section{Introduction}
One of the fundamental quantities in Quantum Information Theory is the relative entropy
between two states. 
Other entropic quantities, such as the von Neumann entropy of a state, the conditional entropy
and the mutual information for a bipartite state, are obtainable from the relative entropy. 
Many basic properties of these entropic quantities can be derived from those of the relative entropy.
The strong subadditivity of the
von Neumann entropy, which is one of the most powerful results in Quantum Information
Theory, follows easily from the monotonicity
of the relative entropy. Other than acting as a parent quantity for other entropic quantities, the
relative entropy itself has an operational meaning. It serves as a measure of distinguishability
between different states.

The notion of relative entropy was introduced in 1951, in Mathematical Statistics, by Kullback and Leibler 
\cite{kull}, 
as a means of comparing two different probability distributions. Its extension to the quantum 
setting was
due to Umegaki \cite{umegaki}. The classical relative entropy plays a role similar to its
quantum counterpart. Classical entropic quantities such as the Shannon entropy of a random variable,
the conditional entropy, the mutual information and the joint entropy of a pair of random variables are
all obtainable from it. 

More recently, the concept of relative entropy has been generalized to sequences of states, in the
so-called Information Spectrum Approach \cite{hanverdu93,verdu94,han, ogawa00, nagaoka02, bd1}. The latter is a powerful method which enables us to evaluate
the optimal rates of various information theory protocols, without making any assumption on the
structure of the sources, channels or (in the quantum case) the entanglement resources involved. 
In particular, it allows us to eliminate the frequently-used, but often unjustified,
assumption that sources, channels and entanglement resources are memoryless. 
The quantities arising from the generalizations of the relative entropy in this approach 
are referred to as {\em{spectral divergence rates}}
[see Section \ref{spectral} for their definitions and properties]. Like the relative entropy,
they yield quantities which can be viewed as generalizations of entropy rates 
for sequences of states (or probability distributions,
in the classical case). These quantities have been proved 
to be of important operational significance in Classical and Quantum Information Theory, 
as the optimal rates of protocols such as data compression, 
dense coding, entanglement concentration and dilution, transmission of classical information through
a quantum channel and in the context of hypothesis testing [see e.g.~\cite{ogawa00, nagaoka02, hay_conc, hayashi03, bd_rev, bd_entpure, bd_entmixed, Mat07}]. Hence, spectral divergence rates can be viewed
as the basic tools of a unifying mathematical framework for studying information theoretical protocols.

A simultaneous but independent approach, developed to overcome the limitation of the memoryless criterion 
is the so-called {\em{Smooth Entropy framework}}, developed by Renner et al.~(see e.g. \cite{renatophd},
\cite{KR}, \cite{RenWol04b}, \cite{RenWol05b}, \cite{ReWoWu07}). This approach 
introduced new entropy measures called {\em{smooth R\'enyi entropies}} or {\em{smooth min- and max- entropies}}. 
In contrast to the spectral entropy rates, the (unconditional and conditional) smooth
min- and max- entropies are defined for individual states (or probability distributions) rather than sequences
of states. They are non-asymptotic in nature but depend on an additional parameter $\epsilon$, the smoothness
parameter. Similar to the spectral entropies, the min- and max- entropies have various interesting properties e.g.chain rule inequalities and strong subadditivity. They are also of operational significance and have proved
useful in the context of randomness extraction and cryptography.

Recently it was shown \cite{ndrr} that the two approaches discussed above, 
are related in the sense that the spectral entropy rates 
are obtained as asymptotic limits of the corresponding smooth min- and max- entropies.

In this paper we introduce two new relative entropy quantities, namely the {\em{min- and max- relative entropies}}
(and their smoothed versions), which act as parent quantities for the unconditional and conditional 
min- and max- entropies of Renner \cite{renatophd}. These new relative entropy quantities are seen to
satisfy several interesting properties. Their relations to the quantum relative entropy \cite{OP} and to the quantum
Chernoff bound \cite{chernoff} are discussed. They also allow us to define analogues of the mutual information in the {\em{Smooth R\'enyi Entropy Framework}}. The operational significance of the latter 
will be elaborated in a forthcoming paper. 
We define a new entanglement monotone, which we refer to as the 
{\em{max-relative entropy of entanglement}}, and which is an upper bound to the relative entropy of entanglement.
Moreover, the smooth min- and max- relative
entropies and the analogous quantities in the Quantum Information Spectrum framework, namely the 
spectral divergence rates, are proved to be related in the asymptotic limit.
The proofs are entirely self-contained, relying only on the definitions 
of the entropic quantities involved, and the lemmas stated in Section \ref{math}.

The min- and max- relative entropies both have interesting 
operational significances.
The operational meaning of the min- relative entropy is given in state
discrimination as the negative logarithm of the optimal error probability
of the second kind, when the error probability of the first kind is
required to be zero. This is explained in the proof of Lemma \ref{lem122}.
The max- relative entropy, on the other hand is related to the 
optimal Bayesian error probability, in determining which one of a finite
number (say $M$) of known states a given quantum system is prepared in. 
Suppose the quantum system is prepared in the $k^{th}$ state, $\rho_k$,
with apriori probability $p_k$, and the optimisation
is over all possible choices of POVMs which could be made on the system
to determine its state. Then the optimal Bayesian probability of error
is given by
$$P_{av} = 1 - \inf_{\sigma} \max_{1\le k\le M} p_k 2^{D_{max}(\rho_k||\sigma)},
$$
where the infimum is taken over all possible quantum states, $\sigma$,
in the Hilbert space of the system. This operational interpretation, was first provided in \cite{KRS} though in a somewhat different formalism. It is also explained in \cite{MMND}.

We start with some mathematical preliminaries in Section \ref{math}. We define ({\em{non-smooth}}) min- and max-
relative entropies in Section \ref{non-smooth}, and state how the unconditional and conditional 
min- and max- entropies are obtained from them. We also define the min- and max- mutual 
informations. In Section \ref{props} we investigate the properties of the new relative entropy quantities.
A new entanglement monotone is introduced in Theorem 1 of Section \ref{entm}, and
some of its properties are discussed. Next, we define the smoothed versions of the min- and max- relative
entropies in Section \ref{smooth}. After briefly recalling  the definitions and basic properties
of the spectral divergence rates in Section \ref{spectral}, we go on to 
prove the relations between them and the smooth min- and max- relative entropies in
Section \ref{thms}. These are stated as Theorem 2 and Theorem 3, 
which along with the new entanglement monotone (Theorem 1), 
and the properties of the 
min- and max- relative entropies, constitute the main results of this paper. Our results
apply to the quantum setting and thus include the classical setting as a special case.

\section{Mathematical Preliminaries}
\label{math}

Let ${\cal{B}}({\cal{H}})$ denote the algebra of linear operators
acting on a finite-dimensional Hilbert space ${\cal{H}}$. The von
Neumann entropy of a state $\rho$, i.e., a positive operator of unit
trace in ${\cal{B}}({\cal{H}})$, is given by $S(\rho) = - \tr \rho
\log \rho$. Throughout this paper, we take the logarithm to base $2$
and all Hilbert spaces considered are finite-dimensional. We denote
the identity operator in ${\cal{B}}({\cal{H}})$ by $I$. 

In this paper we make extensive use of spectral projections. Any self-adjoint operator $A$ acting on a finite-dimensional Hilbert space may be written in its spectral
decomposition $A = \sum_i \lambda_i P_i$, where $P_i$ denotes the orthogonal projector
onto the eigenspace of $A$ spanned by eigenvectors corresponding to the eigenvalue $\lambda_i$.  
We define the
positive spectral projection on $A$ as $\{ A \geq 0 \} := \sum_{\lambda_i \geq 0} P_i$, the projector onto the eigenspace of $A$ corresponding to positive eigenvalues.  Corresponding definitions apply for the
other spectral projections $\{ A < 0 \}, \{ A > 0 \}$ and $\{ A \leq 0 \}$. For two operators $A$ and $B$, we can
then define $\{ A \geq B \}$ as $\{ A - B \geq 0 \}$.  The following key lemmas are useful. For a proof of Lemma \ref{lem1} and Lemma \ref{lemmaCPT}, see \cite{bd1, ogawa00,nagaoka02}. Lemma 2
is proved in \cite{ndrr}.
\begin{lemma}
\label{lem1}
For self-adjoint operators $A$, $B$ and any positive operator $0 \leq P 
\leq I,$
we have
\bea
\mathrm{Tr}\big[ P(A-B)\big] &\leq& \mathrm{Tr}\big[ \big\{ A \geq B \big\}
(A-B)\big]\label{lem11}\\
\mathrm{Tr}\big[ P(A-B)\big] &\geq& \mathrm{Tr}\big[ \big\{ A \leq B \big\}
(A-B)\big].
\label{lem12}
\eea
Identical conditions hold for strict inequalities in the spectral
projections $\{A < B\}$ and $\{ A > B\}$.
\end{lemma}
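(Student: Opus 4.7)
The plan is to reduce both inequalities to a single eigenvalue-level calculation after passing to the self-adjoint operator $C := A-B$. Writing $C$ in its spectral decomposition $C = \sum_i \lambda_i P_i$, the positive spectral projection becomes $\{A \geq B\} = \{C \geq 0\} = \sum_{\lambda_i \geq 0} P_i$, and the completeness relation $\sum_i P_i = I$ lets me expand any trace against $C$ as a weighted sum of $\lambda_i$.

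Concretely, I would introduce $d_i := \mathrm{Tr}[P_i]$ and $t_i := \mathrm{Tr}[P P_i]$, so that
\begin{equation}
\mathrm{Tr}[P(A-B)] = \sum_i \lambda_i t_i, \qquad \mathrm{Tr}\bigl[\{A \geq B\}(A-B)\bigr] = \sum_{\lambda_i \geq 0} \lambda_i d_i.
\end{equation}
The assumption $0 \leq P \leq I$ delivers the pointwise bounds $0 \leq t_i \leq d_i$: the lower bound is positivity of $P$, and the upper bound follows from $\mathrm{Tr}[P P_i] = \mathrm{Tr}[P_i^{1/2} P P_i^{1/2}] \leq \mathrm{Tr}[P_i^{1/2} I P_i^{1/2}] = d_i$.

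The key step is then to split the first sum according to the sign of $\lambda_i$. For indices with $\lambda_i \geq 0$, the bound $t_i \leq d_i$ gives $\lambda_i t_i \leq \lambda_i d_i$. For indices with $\lambda_i < 0$, the term $\lambda_i t_i$ is nonpositive and can be dropped from an upper bound. Adding the two contributions yields $\sum_i \lambda_i t_i \leq \sum_{\lambda_i \geq 0} \lambda_i d_i$, which is \reff{lem11}. The inequality \reff{lem12} is entirely symmetric: this time one keeps only the terms with $\lambda_i < 0$ (where $\lambda_i t_i \geq \lambda_i d_i$) and drops the nonnegative contributions, which provides a lower bound matching $\mathrm{Tr}[\{A \leq B\}(A-B)]$. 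Equivalently, one can derive \reff{lem12} from \reff{lem11} by applying the latter with $P \mapsto I-P$ and $(A,B) \mapsto (B,A)$.

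I do not anticipate a genuine obstacle; the only point requiring a moment's care is the bound $t_i \leq d_i$, and the handling of the strict-inequality version of the lemma. For the latter, note that the sets $\{\lambda_i > 0\}$ and $\{\lambda_i \geq 0\}$ differ only on the kernel of $C$, on which the integrand $\lambda_i$ vanishes, so replacing $\{A \geq B\}$ by $\{A > B\}$ (and likewise $\leq$ by $<$) leaves both sides of the above identities unchanged. This immediately gives the strict-inequality spectral projection statements mentioned at the end of the lemma.
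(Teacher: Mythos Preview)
Your argument is correct. The paper itself does not give a proof of this lemma; it simply cites \cite{bd1,ogawa00,nagaoka02}. The standard proof in those references is precisely the positive/negative decomposition of $C=A-B$ that you carry out at the eigenvalue level: writing $C=C_+-C_-$ with $C_\pm\ge 0$ mutually orthogonal, one has $\tr[PC]\le\tr[PC_+]\le\tr[C_+]=\tr[\{C\ge0\}C]$, using $\tr[PC_-]\ge0$ and $P\le I$. Your version with $t_i,d_i$ is the same computation written in the eigenbasis of $C$, so there is no substantive difference in approach.
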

\begin{lemma}
\label{lem2}
Given a state $\rho_n$ and a self-adjoint
operator $\omega_n$, for any real $\gamma$, we have
$$
\mathrm{Tr}\big[\{\rho_n \ge 2^{n\gamma}\omega_n \} \omega_n \bigr]
\leq 2^{-n\gamma}.
$$
\end{lemma}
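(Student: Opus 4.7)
My plan is to combine Lemma~\ref{lem1} with the normalisation $\mathrm{Tr}\,\rho_n = 1$. I would set $P_n := \{\rho_n \ge 2^{n\gamma}\omega_n\}$, which by the spectral theorem is the orthogonal projector onto the non-negative eigenspace of the self-adjoint operator $\rho_n - 2^{n\gamma}\omega_n$.

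The key step is to apply \reff{lem11} with $A = \rho_n$, $B = 2^{n\gamma}\omega_n$, and the free positive contraction taken to be the zero operator. This immediately yields
$$0 \le \mathrm{Tr}\bigl[P_n(\rho_n - 2^{n\gamma}\omega_n)\bigr],$$
which is essentially just the statement that $P_n$ picks out the non-negative part of the spectrum of $\rho_n - 2^{n\gamma}\omega_n$. Rearranging gives $2^{n\gamma}\mathrm{Tr}[P_n\omega_n] \le \mathrm{Tr}[P_n\rho_n]$, after which I would use $0 \le P_n \le I$ together with the fact that $\rho_n$ is a state to bound the right-hand side by $\mathrm{Tr}\,\rho_n = 1$. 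Dividing through by the positive factor $2^{n\gamma}$ then produces the claimed inequality.

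There isn't really a hard step here — the whole argument is a one-line consequence of the defining property of the spectral projection. One subtle point worth flagging is that $\omega_n$ is only assumed self-adjoint, not positive, so $\mathrm{Tr}[P_n\omega_n]$ need not itself be non-negative; but this is harmless for the one-sided bound being claimed. Note also that the specific form of the scalar $2^{n\gamma}$ plays no role beyond its positivity: exactly the same reasoning yields $\mathrm{Tr}[\{\rho \ge c\omega\}\omega] \le 1/c$ for any $c > 0$, any state $\rho$, and any self-adjoint $\omega$, so the lemma is really a statement about arbitrary scalings rather than about the asymptotic parameter $n$.
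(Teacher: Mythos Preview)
Your argument is correct. The paper does not actually give its own proof of this lemma but defers to \cite{ndrr}; the standard proof there is exactly the one you wrote: from $\mathrm{Tr}\bigl[P_n(\rho_n - 2^{n\gamma}\omega_n)\bigr]\ge 0$ one gets $\mathrm{Tr}[P_n\omega_n]\le 2^{-n\gamma}\mathrm{Tr}[P_n\rho_n]\le 2^{-n\gamma}$, so your approach coincides with it.
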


\begin{lemma}
\label{lemmaCPT}
For self-adjoint operators $A$ and $B$, and any completely positive trace-preserving (CPTP) map $\mathcal{T}$, the inequality
\begin{equation}
\mathrm{Tr}\big[ \{\mathcal{T}(A) \geq \mathcal{T}(B) \}\mathcal{T}(A-B)\big] \leq \mathrm{Tr}\big[ \big\{ A \geq B \big\} (A-B)\big]
\label{eqn:second_ineq}
\end{equation}
holds.
\end{lemma}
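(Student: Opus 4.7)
The plan is to reduce Lemma \ref{lemmaCPT} to Lemma \ref{lem1} by using the adjoint of the CPTP map $\mathcal{T}$. Let me write $P := \{\mathcal{T}(A) \geq \mathcal{T}(B)\}$. Since $P$ is an orthogonal projector, it satisfies $0 \leq P \leq I$, so it is an admissible operator for the inequality in Lemma \ref{lem1}, provided one can move it to the same side as $A-B$.

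First I would rewrite the left-hand side using the Hilbert--Schmidt duality
\begin{equation}
\mathrm{Tr}\bigl[P\,\mathcal{T}(A-B)\bigr] = \mathrm{Tr}\bigl[\mathcal{T}^{*}(P)\,(A-B)\bigr],
\end{equation}
where $\mathcal{T}^{*}$ is the adjoint (Heisenberg-picture) map of $\mathcal{T}$. Since $\mathcal{T}$ is completely positive, so is $\mathcal{T}^{*}$; since $\mathcal{T}$ is trace-preserving, $\mathcal{T}^{*}$ is unital, i.e.\ $\mathcal{T}^{*}(I)=I$. In particular $\mathcal{T}^{*}$ is a positive unital map, and applying it to the operator inequality $0\leq P\leq I$ yields $0\leq \mathcal{T}^{*}(P)\leq I$.

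With $Q := \mathcal{T}^{*}(P)$ satisfying $0\leq Q\leq I$, I would then invoke inequality \reff{lem11} of Lemma \ref{lem1} with the self-adjoint operators $A,B$ and the positive operator $Q$:
\begin{equation}
\mathrm{Tr}\bigl[Q\,(A-B)\bigr] \leq \mathrm{Tr}\bigl[\{A\geq B\}\,(A-B)\bigr].
\end{equation}
Chaining this with the duality identity above gives the claim. The only subtle point is checking that $0\leq \mathcal{T}^{*}(P)\leq I$, which is really the whole content of the step; it is an immediate consequence of positivity plus unitality of $\mathcal{T}^{*}$, the latter being equivalent to trace-preservation of $\mathcal{T}$. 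Note that complete positivity is not actually needed here---mere positivity of $\mathcal{T}$ suffices---but the statement is phrased in terms of CPTP maps since these are the physically relevant operations. No serious obstacle is anticipated; the argument is a two-line consequence of Lemma \ref{lem1} once one notices that pulling $P$ through $\mathcal{T}$ via the adjoint preserves the bounds $0\leq \cdot\leq I$.
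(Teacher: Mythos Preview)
Your argument is correct and is exactly the standard proof that the cited references \cite{bd1, ogawa00, nagaoka02} give: pass to the adjoint $\mathcal{T}^*$, use that trace-preservation of $\mathcal{T}$ makes $\mathcal{T}^*$ unital so that $0\le \mathcal{T}^*(P)\le I$, and then apply Lemma~\ref{lem1}. The paper does not spell out its own proof of Lemma~\ref{lemmaCPT}, so there is nothing further to compare.
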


The trace distance between two operators $A$ and $B$ is given by
\be
||A-B||_1 := \tr\bigl[\{A \ge B\}(A-B)\bigr] -
 \tr\bigl[\{A < B\}(A-B)\bigr]
\ee
The fidelity of states $\rho$ and $\rho'$ is defined to be
$$ F(\rho, \rho'):= \tr \sqrt{\rho^{\half} \rho' \rho^{\half}}.
$$
The trace distance between two states $\rho$ and $\rho'$ is
related to the fidelity $ F(\rho, \rho')$ as follows (see (9.110) of \cite{nielsen}):
\be
  \frac{1}{2} \| \rho - \rho' \|_1
\leq
  \sqrt{1-F(\rho, \rho')^2}
\leq
  \sqrt{2(1-F(\rho, \rho'))} \ .
\label{fidelity}
\ee

We use the following simple corollary of Lemma \ref{lem1}:
\begin{corollary}
\label{cor1}
For self-adjoint operators $A$, $B$ and any positive operator $0 \leq P \leq
I$,
the inequality
$$||A-B||_1 \le \eps,$$
for any $\eps >0$,
implies that $$\tr \bigl[P(A-B) \bigr] \le \eps.$$
\end{corollary}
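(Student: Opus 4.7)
The plan is to chain the definition of the trace distance with the variational-type bound supplied by Lemma~\ref{lem1}. The starting point is the observation that the spectral projections $\{A\ge B\}$ and $\{A<B\}$ resolve the identity, so $(A-B)=\{A\ge B\}(A-B)+\{A<B\}(A-B)$. On the first term $(A-B)$ is nonnegative, on the second it is strictly negative, so $\tr[\{A\ge B\}(A-B)]\ge 0$ and $\tr[\{A<B\}(A-B)]\le 0$.

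Next, I would rewrite the hypothesis using the given definition of the trace norm:
\begin{equation*}
\|A-B\|_1 = \tr[\{A\ge B\}(A-B)] - \tr[\{A<B\}(A-B)].
\end{equation*}
Since the second trace on the right is nonpositive, subtracting it only makes the right-hand side larger, so $\tr[\{A\ge B\}(A-B)] \le \|A-B\|_1 \le \eps$.

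Finally, I would invoke inequality \reff{lem11} of Lemma~\ref{lem1}, which gives $\tr[P(A-B)] \le \tr[\{A\ge B\}(A-B)]$ for any $0\le P\le I$. Combining this with the previous bound yields $\tr[P(A-B)]\le\eps$, as required.

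There is no real obstacle here; the corollary is essentially a one-line consequence of Lemma~\ref{lem1} together with the sign structure of the two pieces of the trace-norm decomposition. The only thing to be careful about is making the signs explicit, i.e.\ noting that $\tr[\{A<B\}(A-B)]\le 0$ so that dropping it preserves the inequality direction.
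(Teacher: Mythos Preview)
Your argument is correct and matches what the paper intends: the paper does not spell out a proof but simply labels the result a ``simple corollary of Lemma~\ref{lem1},'' and your derivation via the trace-norm decomposition together with \reff{lem11} is precisely that corollary made explicit.
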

We also use the ``gentle measurement'' lemma \cite{winter99,ogawanagaoka02}:
\begin{lemma}\label{gm} For a state $\rho$ and operator $0\le \Lambda\le I$, if
$\mathrm{Tr}(\rho \Lambda) \ge 1 - \delta$, then
$$||\rho -   {\sqrt{\Lambda}}\rho{\sqrt{\Lambda}}||_1 \le {2\sqrt{\delta}}.$$
The same holds if $\rho$ is only a subnormalized density operator.
\end{lemma}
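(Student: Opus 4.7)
The plan is to reduce $\|\rho - \sqrt{\Lambda}\rho\sqrt{\Lambda}\|_1$ to two terms, each of which can be controlled by a Cauchy--Schwarz-type estimate combined with the hypothesis $\mathrm{Tr}(\rho\Lambda)\ge 1-\delta$.

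First I would use the algebraic identity
\begin{equation}
\rho - \sqrt{\Lambda}\rho\sqrt{\Lambda} = (I-\sqrt{\Lambda})\rho + \sqrt{\Lambda}\rho(I-\sqrt{\Lambda}),
\end{equation}
obtained simply by adding and subtracting $\sqrt{\Lambda}\rho$. Then the triangle inequality for $\|\cdot\|_1$ yields $\|\rho - \sqrt{\Lambda}\rho\sqrt{\Lambda}\|_1 \le \|(I-\sqrt{\Lambda})\rho\|_1 + \|\sqrt{\Lambda}\rho(I-\sqrt{\Lambda})\|_1$, and the plan is to bound each summand by $\sqrt{\delta}$.

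For each summand I would factor $\rho = \sqrt{\rho}\cdot\sqrt{\rho}$ and invoke the Cauchy--Schwarz inequality for the trace norm, $\|XY\|_1 \le \|X\|_2\,\|Y\|_2$ with $\|X\|_2 = \sqrt{\mathrm{Tr}(XX^*)}$. For the first term this gives $\|(I-\sqrt{\Lambda})\rho\|_1 \le \sqrt{\mathrm{Tr}\!\bigl[(I-\sqrt{\Lambda})^2\rho\bigr]}\cdot\sqrt{\mathrm{Tr}(\rho)}$, and for the second term $\|\sqrt{\Lambda}\rho(I-\sqrt{\Lambda})\|_1 \le \sqrt{\mathrm{Tr}(\Lambda\rho)}\cdot\sqrt{\mathrm{Tr}\!\bigl[(I-\sqrt{\Lambda})^2\rho\bigr]}$. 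The first factor in the second bound is at most $1$, and in both bounds $\sqrt{\mathrm{Tr}(\rho)} \le 1$ (also in the subnormalized case), so both terms reduce to estimating $\mathrm{Tr}\!\bigl[(I-\sqrt{\Lambda})^2\rho\bigr]$.

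The key operator-monotone step is then to observe that, since $0 \le \Lambda \le I$, we have $0 \le I-\sqrt{\Lambda} \le I$, so $(I-\sqrt{\Lambda})^2 \le I-\sqrt{\Lambda}$ as operators, and moreover $\sqrt{\Lambda} \ge \Lambda$ (since $\sqrt{x}\ge x$ on $[0,1]$ applied via the functional calculus). Chaining these gives
\begin{equation}
\mathrm{Tr}\!\bigl[(I-\sqrt{\Lambda})^2\rho\bigr] \le \mathrm{Tr}\!\bigl[(I-\sqrt{\Lambda})\rho\bigr] \le \mathrm{Tr}\!\bigl[(I-\Lambda)\rho\bigr] \le \delta.
\end{equation}
Combining, each summand is at most $\sqrt{\delta}$, giving the claimed $2\sqrt{\delta}$ bound. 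The subnormalized case requires no modification, because every use of $\mathrm{Tr}(\rho)$ is via the inequality $\mathrm{Tr}(\rho)\le 1$. The main obstacle I anticipate is recognizing the right algebraic split in the first step and the two operator inequalities $(I-\sqrt{\Lambda})^2 \le I-\sqrt{\Lambda}$ and $\sqrt{\Lambda}\ge \Lambda$, which are what convert the $\sqrt{\Lambda}$-expressions back into the hypothesis on $\Lambda$.
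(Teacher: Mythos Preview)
Your proof is correct. Note, however, that the paper does not actually give a proof of this lemma: it is stated as the ``gentle measurement'' lemma with citations to Winter (1999) and Ogawa--Nagaoka (2002), and is used as a black box. Your argument---the algebraic split $\rho - \sqrt{\Lambda}\rho\sqrt{\Lambda} = (I-\sqrt{\Lambda})\rho + \sqrt{\Lambda}\rho(I-\sqrt{\Lambda})$, followed by H\"older/Cauchy--Schwarz and the operator inequalities $(I-\sqrt{\Lambda})^2 \le I-\sqrt{\Lambda} \le I-\Lambda$---is one of the standard proofs of this result and matches the style of the Ogawa--Nagaoka reference. The subnormalized case is indeed handled automatically, since $\tr[(I-\Lambda)\rho] = \tr\rho - \tr(\Lambda\rho) \le 1 - (1-\delta) = \delta$ uses only $\tr\rho \le 1$.
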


\section{Definitions of min- and max- relative entropies}
\label{non-smooth}

\begin{definition}
  The \emph{max- relative entropy} of two operators
  $\rho$ and $\sigma$, such that $\rho \ge 0$, $\tr \rho \le 1$ and 
$\sigma \ge 0$, is defined by  
\be
    D_{\max}(\rho|| \sigma)
  :=
    \log \min\{ \lambda: \, \rho\leq \lambda \sigma \}
  \label{dmax}
\ee
\end{definition}
Note that $D_{\max}(\rho|| \sigma)$ is well-defined if 
$\supp\, \rho \subseteq \supp\, \sigma$. For $\rho$ and $\sigma$
satisfying $\supp\, \rho \subseteq \supp\, \sigma$, $ D_{\max}(\rho|| \sigma)$
is equivalently given by
 \be
    D_{\max}(\rho|| \sigma)
  :=
    \log \mu_{\max}\bigl(\sigma^{-\frac{1}{2}}\rho\sigma^{-\frac{1}{2}}\bigr),
  \label{dmax2}
\ee
where the notation $\mu_{\max}(A)$ is used to denote the maximum eigenvalue
of the operator $A$, and the inverses are generalized inverses defined as follows:
$A^{-1}$ is a generalized inverse of $A$ if $AA^{-1} = A^{-1}A =P_A = P_{A^{-1}}$,
where $P_A, P_{A^{-1}}$ denote the projectors onto the supports of $A$ and $A^{-1}$ respectively.

Another equivalent definition of $D_{\max}(\rho|| \sigma)$ is:
\be
D_{\max}(\rho|| \sigma):=\log \min\{ \lambda: \, \tr\bigl[P_+^\lambda (\rho - \lambda \sigma) \bigr]=0\},
\label{dmax3}
\ee
where $P_+^\lambda:= \{\rho \ge \lambda \sigma\}$.

\begin{definition}
  The \emph{min- relative entropy} of two operators
  $\rho$ and $\sigma$, such that $\rho \ge 0$, $\tr \rho \le 1$ and 
$\sigma \ge 0$, is defined by  
  \be
    D_{\min}(\rho|| \sigma)
  :=    - \log \tr\bigl(\pi_\rho\sigma\bigr) \ ,
  \label{dmin}
\ee
  where $\pi_\rho$ denotes the projector onto $\supp\, \rho$, the support of $\rho$. It is well-defined if 
$\supp\, \rho$ has non-zero intersection with $\supp\, \sigma$.
\end{definition}

Note that  
\be
D_{\min}(\rho|| \sigma) = \lim_{\alpha\rightarrow 0^+} S_\alpha(\rho||\sigma),
\label{relren}
\ee
where $S_\alpha(\rho||\sigma)$ denotes the {\em{quantum relative R\'enyi entropy}} of order $\alpha$, with
$0<\alpha<1$, defined by (see e.g. \cite{OP, hayashibook}):
\be
S_\alpha(\rho||\sigma):= \frac{1}{\alpha-1}\log \tr \rho^{\alpha}\sigma^{1-\alpha}.
\label{renyi}
\ee
Various properties of $D_{\min}(\rho|| \sigma)$ and $D_{\max}(\rho|| \sigma)$ are discussed in
Section \ref{props}.

The min- and max- (unconditional and conditional) entropies, introduced 
by Renner
in \cite{renatophd} can be obtained from $D_{\min}(\rho|| \sigma)$ and $D_{\max}(\rho|| \sigma)$
by making suitable substitutions for the positive operator $\sigma$. In particular,
for $\sigma = I$, we obtain the min- and max- entropies of a state $\rho$, which are
simply the R\'enyi entropies of order infinity and zero, respectively:
\be 
H_{\min} (\rho) = -D_{\max}(\rho|| I) = - \log \| \rho\|_{\infty}
\label{minent}
\ee
\be
H_{\max} (\rho) = -D_{\min}(\rho|| I) = \log \rank(\rho).
\ee
The {min-} and {max-entropies} of a bipartite state, $\rho_{A B}$, relative to a state $\sigma_B$, are 
similarly obtained by setting $\sigma= I_A\otimes \sigma_B$:
\bea
    H_{\min}(\rho_{A B} | \sigma_B)
  &:=& 
    - \log \min\{ \lambda: \, \rho_{A B} \leq \lambda \cdot \id_A \otimes \rho_B \}\nonumber\\
&=& -D_{\max}(\rho_{AB} || I_A\otimes \sigma_B) 
\eea
  and
\bea
    H_{\max}( \rho_{A B} | \sigma_B )
  &:=& 
    \log \tr\bigl(\pi_{A B} (\id_A \otimes \sigma_B)\bigr) \ ,\nonumber\\
  &=& -D_{\min}(\rho_{AB} || I_A\otimes \sigma_B) 
\eea
In the above, $\pi_{A B}$ denotes the projector onto the support of $\rho_{A B}$.

In addition, by considering $\sigma = \rho_A \otimes \rho_B$, 
we obtain the following analogues of the quantum mutual information of a bipartite
state $\rho = \rho_{AB}$:

\begin{definition} For a bipartite state $\rho_{AB}$, the min- and max- mutual
informations are defined by
\bea
D_{\min}(A:B)&:=& D_{\min} (\rho_{AB} || \rho_A\otimes \rho_B)\nonumber\\
D_{\max}(A:B)&:=& D_{\max} (\rho_{AB} || \rho_A\otimes \rho_B)\nonumber\\
\label{mutual}
\eea  
\end{definition}

Smooth versions of the min- and max- relative entropies are defined in Section \ref{smooth}.
These in turn yield the (unconditional and conditional) smooth min- and max- entropies \cite{renatophd,
ndrr} and mutual informations, upon similar substitutions for the operator $\sigma$. It was proved 
in \cite{ndrr} that the smooth min- and max- entropies are related to the spectral entropy rates
used in the Quantum Information Spectrum approach [see Section \ref{spectral} or \cite{bd1}],
in the sense that the spectral entropy rates are the asymptotic limits of the
smooth entropies. As discussed in Section \ref{spectral}, the spectral entropy rates
are obtainable from two quantities, namely the inf- and sup- spectral divergence rates.
In Section \ref{thms}, we prove that these spectral divergence rates are indeed asymptotic limits of the
smooth min- and max- relative entropies.

\section{Properties of min- and max- relative entropies}
\label{props}
The min- and max- relative entropies satisfy the following properties:
\begin{lemma} 
\label{minmax} For a state $\rho$ and a positive operator $\sigma$
\be D_{\min}(\rho||\sigma) \le D_{\max}(\rho||\sigma)
\ee
\end{lemma}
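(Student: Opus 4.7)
The plan is to start from the defining inequality of $D_{\max}(\rho\|\sigma)$ and derive a lower bound on $\tr(\pi_\rho \sigma)$, which is exactly what $D_{\min}(\rho\|\sigma)$ depends on.

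Set $\lambda^\ast := 2^{D_{\max}(\rho\|\sigma)}$. By definition this is the smallest nonnegative real number such that $\rho \leq \lambda^\ast \sigma$. My first step is to sandwich this operator inequality between two copies of the projector $\pi_\rho$ onto $\supp\,\rho$: since conjugation by a self-adjoint operator preserves operator order, one gets $\pi_\rho \rho \pi_\rho \leq \lambda^\ast\, \pi_\rho \sigma \pi_\rho$. Taking the trace of both sides and using cyclicity together with $\pi_\rho^2 = \pi_\rho$ gives
\begin{equation}
\tr(\pi_\rho \rho) \leq \lambda^\ast\, \tr(\pi_\rho \sigma).
\end{equation}

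Next I use the fact that $\rho$ is a state supported on the range of $\pi_\rho$, so $\pi_\rho \rho = \rho$ and hence $\tr(\pi_\rho \rho) = \tr \rho = 1$. This yields
\begin{equation}
\tr(\pi_\rho \sigma) \geq \frac{1}{\lambda^\ast} = 2^{-D_{\max}(\rho\|\sigma)}.
\end{equation}
Taking $-\log$ of both sides and recalling the definition $D_{\min}(\rho\|\sigma) = -\log \tr(\pi_\rho \sigma)$ immediately gives $D_{\min}(\rho\|\sigma) \leq D_{\max}(\rho\|\sigma)$.

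There is essentially no hard step; the only thing to be slightly careful about is the well-definedness assumption, i.e., that $\supp\,\rho \subseteq \supp\,\sigma$ so that $D_{\max}(\rho\|\sigma)$ is finite (otherwise the inequality is trivial as the right-hand side is $+\infty$), and that $\supp\,\rho \cap \supp\,\sigma \neq \{0\}$ so that $D_{\min}(\rho\|\sigma)$ is well-defined, which is automatic from $\tr \rho = 1$ and $\supp\,\rho \subseteq \supp\,\sigma$. Apart from that, the argument is a one-line manipulation of the operator inequality that defines $D_{\max}$.
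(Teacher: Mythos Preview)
Your proof is correct and is essentially identical to the paper's: the paper also sets $\lambda = 2^{D_{\max}(\rho\|\sigma)}$, applies the positive operator $\pi_\rho$ to the inequality $\lambda\sigma - \rho \ge 0$ (via $\tr[(\lambda\sigma-\rho)\pi_\rho]\ge 0$), and concludes $\tr(\pi_\rho\sigma)\ge 1/\lambda$. The only cosmetic difference is that you sandwich by $\pi_\rho$ and then trace, whereas the paper invokes $\tr(AB)\ge 0$ for $A,B\ge 0$ directly; both yield the same one-line inequality.
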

\begin{proof} (This is exactly analogous to the proof of Lemma 3.1.5 in \cite{renatophd}).
Let $\pi_\rho$ denote the projector onto the support of $\rho$, and let $\lambda \ge 0$
such that $D_{\max}(\rho||\sigma) = \log \lambda$, i.e., $\lambda \sigma - \rho \ge 0$.
Then, using the fact that for positive semi-definite operators $A$ and $B$, $\tr(AB) \ge 0$,  
we get
$$
0 \le \tr \bigl((\lambda \sigma - \rho)\pi_\rho\bigr) = \lambda \tr (\pi_\rho \sigma) - 1.$$
Hence, 
$$
D_{\min}(\rho||\sigma) := - \log \tr (\pi_\rho \sigma) \le \log \lambda = D_{\max}(\rho||\sigma) 
$$
\end{proof}

\begin{lemma} 
\label{lem6}
The min- and max- relative entropies are non-negative when both $\rho$ and $\sigma$ are 
states. They are both equal to zero when $\rho$ and $\sigma$ are identical states. Moreover, 
$D_{\min}(\rho||\sigma)=0$ when $\rho$ and $\sigma$ have identical supports.
\end{lemma}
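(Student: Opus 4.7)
The plan is to handle the three claims in the order stated, each via a short direct estimate using only the definitions and elementary positivity facts.

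For the non-negativity of $D_{\min}$, I would start from $D_{\min}(\rho\|\sigma)=-\log\tr(\pi_\rho\sigma)$. Since $\pi_\rho$ is an orthogonal projector, $0\le\pi_\rho\le I$, so by the positivity of $\sigma$ and $\tr\sigma=1$ we immediately get $\tr(\pi_\rho\sigma)\le\tr\sigma=1$, whence $-\log\tr(\pi_\rho\sigma)\ge 0$. For the non-negativity of $D_{\max}$, I would take any $\lambda$ admissible in the definition \reff{dmax}, i.e.\ $\rho\le\lambda\sigma$, and apply the trace to both sides; since $\rho$ and $\sigma$ are states this gives $1=\tr\rho\le\lambda\tr\sigma=\lambda$, so the infimum in \reff{dmax} is bounded below by $1$ and hence $D_{\max}(\rho\|\sigma)\ge 0$.

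For the second claim, when $\rho=\sigma$, the choice $\lambda=1$ is admissible in \reff{dmax} (and by the previous paragraph is the smallest admissible $\lambda$), so $D_{\max}(\rho\|\rho)=\log 1 = 0$. For $D_{\min}$, note that $\pi_\rho\rho=\rho$, so $\tr(\pi_\rho\rho)=\tr\rho=1$ and thus $D_{\min}(\rho\|\rho)=0$. The third claim then follows from the same computation: if $\supp\,\rho=\supp\,\sigma$ then $\pi_\rho=\pi_\sigma$, so $\tr(\pi_\rho\sigma)=\tr(\pi_\sigma\sigma)=\tr\sigma=1$, and therefore $D_{\min}(\rho\|\sigma)=-\log 1=0$.

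I do not expect any real obstacle here; the only point worth being careful about is that each argument uses $\tr\rho=1$ or $\tr\sigma=1$ exactly once (which is why the statement restricts to the case when both are states), and that the use of $\tr(AB)\ge 0$ for positive $A,B$ is implicit in the step $\tr(\pi_\rho\sigma)\ge 0$ but is not actually needed for the upper bounds that drive the non-negativity. No smoothing, no Lemma \ref{lem1} machinery, and no spectral projection identity beyond $\pi_\rho\rho=\rho$ is required.
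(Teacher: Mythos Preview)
Your proof is correct and essentially follows the paper's approach: the non-negativity of $D_{\min}$ via $\tr(\pi_\rho\sigma)\le\tr\sigma=1$ is identical, and the remaining claims are handled directly from the definitions in both. The only minor difference is that the paper obtains $D_{\max}\ge 0$ by invoking Lemma~\ref{minmax} ($D_{\min}\le D_{\max}$), whereas you give the direct trace argument $\rho\le\lambda\sigma\Rightarrow 1\le\lambda$; both are equally elementary.
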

\begin{proof} Due to Lemma \ref{minmax}, it suffices to prove that $D_{\min}(\rho||\sigma) \ge0$,
when $\rho$ and $\sigma$ are states.
Note that $\tr (\pi_\rho \sigma) \le \tr \sigma = 1$, where $\pi_\rho$ denotes the projector onto the 
support of $\rho$. Hence, 
$$D_{\min}(\rho||\sigma):= - \log \tr (\pi_\rho \sigma) \ge 0.$$

The rest of the lemma follows directly from the definitions \reff{dmax2} and \reff{dmin} 
of the max- and min- relative entropies, respectively.

\end{proof}

\begin{lemma}
\label{mono} 
The min- and max- relative entropies are monotonic under CPTP maps, i.e., for
a state $\rho$, a positive operator $\sigma$, and a CPTP map $\mathcal{T}$:
\be D_{\min}(\mathcal{T}(\rho)||\mathcal{T}(\sigma)) \le    D_{\min}(\rho||\sigma)
\label{monmin}
\ee
and
\be D_{\max}(\mathcal{T}(\rho)||\mathcal{T}(\sigma))\le  D_{\max}(\rho||\sigma)
\label{monmax}
\ee
\end{lemma}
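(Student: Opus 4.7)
My plan is to treat the two inequalities separately, with the $D_{\max}$ case being essentially immediate from the definition and the $D_{\min}$ case requiring a short argument via the adjoint map.

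For the max-relative entropy, the plan is to use the definition \reff{dmax}. Let $\lambda = 2^{D_{\max}(\rho||\sigma)}$, so that $\lambda \sigma - \rho \geq 0$. Since any CPTP map $\mathcal{T}$ is positive, it preserves this operator inequality, giving $\lambda \mathcal{T}(\sigma) - \mathcal{T}(\rho) \geq 0$, i.e.\ $\mathcal{T}(\rho) \leq \lambda \mathcal{T}(\sigma)$. Hence this $\lambda$ is feasible in the minimisation defining $D_{\max}(\mathcal{T}(\rho)||\mathcal{T}(\sigma))$, which is therefore $\leq \log \lambda$. This gives \reff{monmax} in one line and involves no real obstacle.

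For the min-relative entropy, I would pass to the adjoint $\mathcal{T}^*$, which is completely positive and \emph{unital} ($\mathcal{T}^*(I)=I$). Then $\tr(\pi_{\mathcal{T}(\rho)} \mathcal{T}(\sigma)) = \tr(\mathcal{T}^*(\pi_{\mathcal{T}(\rho)})\, \sigma)$, so it suffices to show that the positive operator $A := \mathcal{T}^*(\pi_{\mathcal{T}(\rho)})$ satisfies $A \geq \pi_\rho$; taking traces against $\sigma \ge 0$ then yields $\tr(\pi_{\mathcal{T}(\rho)} \mathcal{T}(\sigma)) \geq \tr(\pi_\rho \sigma)$, which upon applying $-\log$ gives \reff{monmin}. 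The two ingredients are (i) $A \leq I$, coming from $\pi_{\mathcal{T}(\rho)} \leq I$ and unitality of $\mathcal{T}^*$; and (ii) $\tr(A\rho) = \tr(\pi_{\mathcal{T}(\rho)} \mathcal{T}(\rho)) = \tr \mathcal{T}(\rho) = 1 = \tr \rho$.

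The main obstacle — really the only nontrivial point — is extracting $A \geq \pi_\rho$ from (i) and (ii). The plan is to diagonalise $\rho = \sum_i p_i \proj{i}$ with $p_i > 0$ on $\supp \rho$ and observe that (i) forces $\langle i|A|i\rangle \leq 1$ while (ii) forces $\sum_i p_i \langle i|A|i\rangle = 1$, so $\langle i|A|i\rangle = 1$ for each $i \in \supp \rho$. Since $I-A \geq 0$, this yields $(I-A)^{1/2}\ket{i} = 0$, hence $A\ket{i} = \ket{i}$ for all $i$ in $\supp \rho$. A quick block computation with respect to the decomposition $\mathcal{H} = \supp\rho \oplus \supp\rho^{\perp}$ then shows that $A - \pi_\rho$ is the direct sum of $0$ on $\supp \rho$ and a positive block on $\supp \rho^{\perp}$, so $A \geq \pi_\rho$ as required. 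With this, the proof of \reff{monmin} is complete.
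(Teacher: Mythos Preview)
Your argument for \reff{monmax} is exactly the paper's: positivity of $\mathcal{T}$ carries the operator inequality $\rho\le\lambda\sigma$ forward, so the minimising $\lambda$ can only decrease.

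For \reff{monmin} your route is genuinely different from the paper's. The paper does not argue directly: it invokes the identification $D_{\min}(\rho\|\sigma)=\lim_{\alpha\to 0^+}S_\alpha(\rho\|\sigma)$ and then imports the monotonicity of the quantum relative R\'enyi entropy $S_\alpha$ under CPTP maps from \cite{hayashibook}, taking the limit $\alpha\to 0^+$. Your argument instead works entirely at the level of the support projector, using the adjoint $\mathcal{T}^*$ to show $\mathcal{T}^*(\pi_{\mathcal{T}(\rho)})\ge \pi_\rho$. The key step --- that $0\le A\le I$ together with $\tr(A\rho)=\tr\rho$ forces $A$ to act as the identity on $\supp\rho$ --- is correct and cleanly executed via $(I-A)^{1/2}\ket{i}=0$; the block-diagonality then follows automatically and gives $A-\pi_\rho\ge 0$. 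What your approach buys is self-containment: you use nothing beyond positivity and unitality of $\mathcal{T}^*$, whereas the paper's one-line proof hides the real work inside the monotonicity of $S_\alpha$, a result whose standard proofs (via Lieb's concavity or operator interpolation) are considerably deeper than anything needed here. Conversely, the paper's route makes the connection to the R\'enyi family explicit and gets the result essentially for free once that machinery is in place.
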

\begin{proof}
The monotonicity \reff{monmin} follows directly from the monotonicity of the quantum 
relative R\'enyi entropy. For $0<\alpha<1$, we have \cite{hayashibook}:
$$S_\alpha(\rho||\sigma) \le S_\alpha(\mathcal{T}(\rho)||\mathcal{T}(\sigma)).$$
Taking the limit $\alpha \rightarrow 0^+$ on both sides of this inequality 
and using \reff{relren}, yields \reff{monmin}.

The proof of \reff{monmax} is analogous to Lemma 3.1.12 of \cite{renatophd}.
Let $\lambda \ge 0$ such that $\log \lambda = D_{\max}(\rho||\sigma)$ and 
hence $(\lambda\sigma - \rho) \ge 0$. Since $\mathcal{T}$ is a CPTP map,
$\mathcal{T}(\lambda\sigma - \rho)= \lambda\mathcal{T}(\sigma) - \mathcal{T}(\rho) \ge 0.$
Hence, 
\bea
D_{\max}(\mathcal{T}(\rho)||\mathcal{T}(\sigma)) &:=& \log \min \{ \lambda^\prime : 
\mathcal{T}(\rho) 
\le \lambda^\prime \mathcal{T}(\sigma)\}\nonumber\\
&\le& \log \lambda = D_{\max}(\rho||\sigma).\eea
\end{proof}

\begin{lemma} The min-relative entropy is jointly convex
in its arguments.
\end{lemma}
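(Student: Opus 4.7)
The plan is to reduce joint convexity of $D_{\min}$ to two elementary facts: monotonicity of the support projector under convex combinations of the first argument, and convexity of $-\log$. Set $\rho := p\rho_1 + (1-p)\rho_2$ and $\sigma := p\sigma_1 + (1-p)\sigma_2$ for $p\in(0,1)$ (the endpoints $p\in\{0,1\}$ being trivial), and aim to show
\be
D_{\min}(\rho||\sigma) \le p\, D_{\min}(\rho_1||\sigma_1) + (1-p)\, D_{\min}(\rho_2||\sigma_2).
\ee

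The first step is the support identity $\supp\rho = \supp\rho_1 + \supp\rho_2$. Since $\rho_i\ge 0$, one has $\rho|\psi\rangle = 0$ iff $\langle\psi|\rho|\psi\rangle = 0$, which forces $\langle\psi|\rho_i|\psi\rangle = 0$, and hence $\rho_i|\psi\rangle = 0$, for both $i=1,2$. Thus $\ker\rho = \ker\rho_1 \cap \ker\rho_2$, giving $\pi_\rho \ge \pi_{\rho_1}$ and $\pi_\rho \ge \pi_{\rho_2}$ as operators. Since $\sigma_i \ge 0$, this yields $\tr(\pi_\rho \sigma_i) \ge \tr(\pi_{\rho_i}\sigma_i)$ for each $i$, and by linearity of the trace
\be
\tr(\pi_\rho \sigma) \ge p\, \tr(\pi_{\rho_1}\sigma_1) + (1-p)\, \tr(\pi_{\rho_2}\sigma_2).
\ee

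The second step is to apply $-\log$, which is monotone decreasing, to upper-bound $D_{\min}(\rho||\sigma)$ by $-\log$ of the right-hand side above; then concavity of $\log$ (equivalently, convexity of $-\log$) delivers the claimed inequality through Jensen's inequality. The only mild wrinkle is when one of the quantities $\tr(\pi_{\rho_i}\sigma_i)$ vanishes, in which case the right-hand side of the target inequality is $+\infty$ and there is nothing to prove. I do not foresee any genuine obstacle here: the crux is the support identity for sums of positive operators, which is where positivity of the $\rho_i$ is essential, after which convexity of $-\log$ does the rest.
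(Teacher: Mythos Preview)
Your argument is correct, and it is genuinely different from the paper's. The paper deduces joint convexity from monotonicity under CPTP maps: it forms the block-diagonal operators $A=\sum_i p_i\,|i\rangle\langle i|\otimes\rho_i$ and $B=\sum_i p_i\,|i\rangle\langle i|\otimes\sigma_i$ on $\mathcal{H}\otimes\mathbf{C}^n$, computes $D_{\min}(A\|B)=-\log\sum_i p_i\,\tr(\pi_{\rho_i}\sigma_i)$ directly, upper-bounds this by $\sum_i p_i D_{\min}(\rho_i\|\sigma_i)$ via convexity of $-\log$, and then applies monotonicity of $D_{\min}$ under the partial trace over $\mathbf{C}^n$ to pass from $D_{\min}(A\|B)$ down to $D_{\min}\bigl(\sum_i p_i\rho_i\,\big\|\,\sum_i p_i\sigma_i\bigr)$. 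Your route bypasses the extended space and CPTP monotonicity entirely: you use the support identity $\ker(\sum_i p_i\rho_i)=\bigcap_i\ker\rho_i$ to get $\pi_\rho\ge\pi_{\rho_i}$, then $\tr(\pi_\rho\sigma)\ge\sum_i p_i\,\tr(\pi_{\rho_i}\sigma_i)$ follows by positivity, and convexity of $-\log$ finishes as in the paper. Your approach is more elementary and self-contained (the paper's monotonicity of $D_{\min}$ is itself borrowed from monotonicity of the R\'enyi relative entropy), and it exploits the specific feature of $D_{\min}$ that it depends on $\rho$ only through $\supp\rho$; the paper's approach, by contrast, is the standard ``joint convexity from monotonicity'' template that applies to any CPTP-monotone divergence.
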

\begin{proof} The proof follows from the monotonicity of the min-relative entropy under CPTP maps (Lemma \ref{mono}). 
Following \cite{hayashibook}, let $\rho_1, \ldots \rho_n$
be states acting on a Hilbert space ${\cal{H}}$, let $\sigma_1, \ldots \sigma_n$
be positive operators in ${\cal{B}}({\cal{H}})$ such that 
$\supp \,\rho_i \subseteq \supp\,\sigma_i$, for $i=1,\ldots,n$, and let ${\displaystyle{\{p_i\}_{i=1}^n}}$ denote a 
probability distribution. Let $\rho, \sigma \in {\cal{B}}({\cal{H}})$.

Consider the following operators in ${\cal{B}}({\cal{H}} \otimes {\bf{C}}^n)$
$$
A:= \sum_{i=1}^n p_i A_i =\sum_{i=1}^n p_i |i\rangle \langle i| \otimes \rho_i,
$$
$$
B:= \sum_{i=1}^n p_i B_i= \sum_{i=1}^n p_i |i\rangle \langle i|\otimes \sigma_i$$
Note that the the operators $A_i$, $B_j$, $i,j \in \{1,2,\ldots n\}$,
have orthogonal support for $i\ne j$, i.e.,
\be \tr A_i A_j = 0 = \tr A_i B_j= \tr B_i B_j \quad {\hbox{for }} i \ne j.
\label{ortho} \ee
Let $\pi_A$ and $\pi_i$ denote the orthogonal projectors onto the support of 
$A$ and $\rho_i$, respectively, with $i=1,2,\ldots n$. Then using \reff{ortho} 
and the convexity of the function $- \log x$, we obtain
\bea
D_{\min} (A||B) &=& - \log \tr \bigl( \pi_A B\bigr)= - \log  \bigl(\sum_{i=1}^n p_i 
\tr (\pi_i \sigma_i)\bigr)\nonumber\\
&\le &  \sum_{i=1}^n p_i\bigl[- \log \tr (\pi_i \sigma_i)\bigr] \nonumber\\
&=& 
\sum_{i=1}^n p_i D_{\min}(\rho_i || \sigma_i).
\label{rf1}
\eea
Taking the partial traces of $A$ and $B$ over ${\bf{C}}^n$ yields the operators
$$ \tr_{{\bf{C}}^n} A = \sum_i p_i \rho_i \quad ; \quad
\tr_{{\bf{C}}^n} B = \sum_i p_i \sigma_i.$$
However, since the partial trace over ${\bf{C}}^n$ is a CPTP map, we have by Lemma \ref{mono} that
\be
 D_{\min}\Bigl(\sum_{i=1}^n p_i \rho_i || \sum_{i=1}^n p_i \sigma_i\Bigr) \le  D_{\min} (A||B)
\label{rf2}
\ee
The inequalities \reff{rf1} and \reff{rf2} yield the 
joint convexities:
\be
D_{\min}\Bigl(\sum_{i=1}^n p_i \rho_i || \sum_{i=1}^n p_i \sigma_i\Bigr) \le \sum_{i=1}^n p_i D_{\min}(\rho_i || \sigma_i)
\label{rf3}
\ee
\end{proof}

\begin{lemma}
The max- relative entropy of two mixtures of states, $\rho:=\sum_{i=1}^n p_i \rho_i$ and
$\sigma:= \sum_{i=1}^n p_i \sigma_i$, satisfies the following bound:
\be
D_{\max}(\rho||\sigma) \le \max_{1\le i\le n} D_{\max}(\rho_i || \sigma_i)
\label{rf4}
\ee
\end{lemma}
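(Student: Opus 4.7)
The plan is to work directly from the operator-inequality definition \reff{dmax} of $D_{\max}$. For each $i$, set $\lambda_i := 2^{D_{\max}(\rho_i||\sigma_i)}$, so that by definition $\rho_i \le \lambda_i \sigma_i$. If any $\lambda_i = +\infty$ (i.e.\ $\supp\,\rho_i \not\subseteq \supp\,\sigma_i$) the right-hand side of \reff{rf4} is infinite and the inequality is trivial, so I may assume all $\lambda_i$ are finite.

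Next, let $\lambda^* := \max_{1\le i\le n}\lambda_i$. Since $\sigma_i \ge 0$ and $\lambda^*-\lambda_i \ge 0$, I have $(\lambda^*-\lambda_i)\sigma_i \ge 0$, hence $\lambda^*\sigma_i \ge \lambda_i\sigma_i \ge \rho_i$ for every $i$. Multiplying by the nonnegative scalar $p_i$ preserves the operator inequality, and summing over $i$ (a sum of positive semidefinite operators is positive semidefinite) yields
\be
\lambda^*\sigma = \lambda^*\sum_{i=1}^n p_i\sigma_i \ \ge\ \sum_{i=1}^n p_i\rho_i = \rho.
\ee
Therefore $\lambda^*$ lies in the set $\{\lambda : \rho\le\lambda\sigma\}$ over which the minimum in \reff{dmax} is taken, giving $D_{\max}(\rho||\sigma) \le \log\lambda^* = \max_{1\le i\le n} D_{\max}(\rho_i||\sigma_i)$.

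There is essentially no hard step here: once one recognises that $D_{\max}$ is defined by an operator inequality and that such inequalities are preserved under multiplication by nonnegative scalars and under sums, the result is immediate. The only point to be careful about is the well-definedness of the $\lambda_i$, handled by the trivial case above.
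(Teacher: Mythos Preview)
Your proof is correct, and it is more direct than the paper's. You work straight from the operator-inequality definition \reff{dmax}: from $\rho_i\le\lambda_i\sigma_i$ and $\lambda^*\ge\lambda_i$ you get $\rho_i\le\lambda^*\sigma_i$, and then a convex combination of operator inequalities yields $\rho\le\lambda^*\sigma$. The paper instead passes through the equivalent projection-based characterisation \reff{dmax3} and invokes Lemma~\ref{lem1}: it bounds $\tr\bigl[P_+^\lambda(\rho-\lambda\sigma)\bigr]$ by $\sum_i p_i\,\tr\bigl[P_+^{\lambda,i}(\rho_i-\lambda\sigma_i)\bigr]$ via the maximality of the positive spectral projection, and then observes that each summand vanishes at $\lambda=\lambda^*$. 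Your route avoids this detour entirely; the only thing the paper's approach buys is an illustration of how Lemma~\ref{lem1} and definition \reff{dmax3} can be used in tandem, which is a recurring technique elsewhere in the paper but is unnecessary here.
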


\begin{proof}
By definition \reff{dmax3}: $$
D_{\max}(\rho||\sigma) = \log \min \{\lambda : \tr \bigl[P_+^\lambda (\rho - \lambda \sigma)\bigr]=0\},$$ 
where $P_+^\lambda = \{\rho \ge \lambda \sigma\}$. Consider the projection operators
$P_+^{\lambda,i} := \{\rho_i \ge \lambda \sigma_i\}$ for $i=1,2,\ldots, n$. 
Then
\bea
0\le \tr \bigl[P_+^\lambda (\rho - \lambda \sigma)\bigr]&=& \sum_i p_i \tr \bigl[P_+^{\lambda} (\rho_i - \lambda \sigma_i)\bigr]
\nonumber\\
&\le &\sum_i p_i \tr \bigl[P_+^{\lambda,i} (\rho_i - \lambda \sigma_i)\bigr], 
\label{rhs}
\nonumber\\
\eea
by Lemma \ref{lem1}.
Set $\lambda = \max_{1\le i\le n} \lambda_i$ where for each $i=1,2,\ldots, n$, $ \lambda_i$ is defined by
$$ \log \lambda_i = D_{\max}(\rho_i || \sigma_i).$$ For this choice of $\lambda$, each term 
in the sum on the right hand side of
\reff{rhs} vanishes, implying that $\tr \bigl[P_+^\lambda (\rho - \lambda \sigma)\bigr]=0$, and hence
$\lambda \ge D_{\max}(\rho||\sigma)$. 
\end{proof}

\begin{lemma}
\label{best}
The min- and max- relative entropies of two states $\rho$ and $\sigma$ 
are related to the quantum relative entropy
$S(\rho||\sigma):= \tr \bigl[\rho \log \rho - \rho \log \sigma\bigr]$ as 
follows:
\be
D_{\min}(\rho||\sigma) \le S(\rho||\sigma) \le D_{\max}(\rho||\sigma). 
\label{max1}
\ee
\end{lemma}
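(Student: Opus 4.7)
For the right-hand inequality $S(\rho||\sigma)\le D_{\max}(\rho||\sigma)$, I would argue directly from the definition \reff{dmax}. Setting $\lambda=2^{D_{\max}(\rho||\sigma)}$ gives $\rho\le\lambda\sigma$, and in particular $\supp\rho\subseteq\supp\sigma$. Operator monotonicity of the logarithm then yields $\log\rho \le \log\lambda\cdot I+\log\sigma$ on $\supp\sigma$ (after a standard regularization $\rho\to\rho+\delta\,\pi_\sigma$, $\sigma\to\sigma+(\delta/\lambda)\pi_\sigma$ to make both sides strictly positive on $\supp\sigma$, and passing $\delta\to 0^+$). Multiplying by $\rho$ and taking the trace gives $\tr(\rho\log\rho)\le\log\lambda+\tr(\rho\log\sigma)$, which rearranges to exactly $S(\rho||\sigma)\le\log\lambda=D_{\max}(\rho||\sigma)$.

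For the left-hand inequality $D_{\min}(\rho||\sigma)\le S(\rho||\sigma)$, the cleanest route is to appeal to the identification \reff{relren}, which states $D_{\min}(\rho||\sigma)=\lim_{\alpha\to 0^+} S_\alpha(\rho||\sigma)$. Together with two standard facts about the quantum Rényi relative entropy \reff{renyi} --- namely that $S_\alpha(\rho||\sigma)$ is non-decreasing in $\alpha$ on $(0,1]$, and that $\lim_{\alpha\to 1^-} S_\alpha(\rho||\sigma)=S(\rho||\sigma)$ --- one has $S_\alpha(\rho||\sigma)\le S(\rho||\sigma)$ for all $\alpha\in(0,1)$. Sending $\alpha\to 0^+$ immediately yields $D_{\min}(\rho||\sigma)\le S(\rho||\sigma)$.

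A self-contained alternative for the lower bound, closer in spirit to the proofs of the preceding lemmas, proceeds via the pinching CPTP map $\mathcal{P}(X):=\pi_\rho X\pi_\rho+(I-\pi_\rho)X(I-\pi_\rho)$, which satisfies $\mathcal{P}(\rho)=\rho$. By monotonicity of the Umegaki relative entropy under CPTP maps, $S(\rho||\sigma)\ge S(\rho||\mathcal{P}(\sigma))$. Since $\mathcal{P}(\sigma)$ is block diagonal and $\rho$ lives entirely in the $\pi_\rho$-block, writing $\pi_\rho\sigma\pi_\rho=\tr(\pi_\rho\sigma)\cdot\sigma'$ with $\sigma'$ a state on $\supp\rho$ gives
\begin{equation}
S(\rho||\mathcal{P}(\sigma))=-\log\tr(\pi_\rho\sigma)+S(\rho||\sigma')=D_{\min}(\rho||\sigma)+S(\rho||\sigma'),
\end{equation}
and the non-negativity $S(\rho||\sigma')\ge 0$ (Klein's inequality) finishes the proof.

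The only real technical nuisance is the support bookkeeping in the $D_{\max}$ bound: operator monotonicity of $\log$ requires strict positivity, so one must insert a small $\delta$-regularization and check that both sides of the trace inequality converge as $\delta\to 0^+$. This is routine, since $\rho$ is zero outside $\supp\sigma$ so the potentially singular contributions drop out.
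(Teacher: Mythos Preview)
Your proof is correct and matches the paper's approach closely. For the upper bound, the paper uses exactly your argument via operator monotonicity of the logarithm (without making the $\delta$-regularization explicit). For the lower bound, the paper uses precisely your ``self-contained alternative'': the pinching map $\mathcal{T}(\omega)=\pi_\rho\omega\pi_\rho+(I-\pi_\rho)\omega(I-\pi_\rho)$, monotonicity of $S(\cdot\|\cdot)$ under CPTP maps, and then the decomposition $S(\rho\|\pi_\rho\sigma\pi_\rho)=S(\rho\|\tilde\sigma)-\log\tr(\pi_\rho\sigma)\ge D_{\min}(\rho\|\sigma)$ with $\tilde\sigma=\pi_\rho\sigma\pi_\rho/\tr(\pi_\rho\sigma)$. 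Your first route for the lower bound, via monotonicity of $S_\alpha$ in $\alpha$ and $\lim_{\alpha\to 1^-}S_\alpha=S$, is a valid alternative that the paper does not take; it trades the explicit pinching construction for an appeal to standard properties of the R\'enyi family, which is shorter but less self-contained.
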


\begin{proof}
We first prove the upper bound $S(\rho||\sigma) \le D_{\max}(\rho||\sigma)$:

Let $\rho \le 2^\alpha \sigma$, with $\alpha = D_{\max}(\rho||\sigma)$. Then 
using the operator monotonicity of the logarithm \cite{bhatia}, we have 
$\log \rho \le \alpha + \log \sigma$. This in turn implies that 
$\rho \log \sigma \ge \rho \log \rho - \alpha \rho.$
Hence, for a state $\rho$, $\tr \rho \log \sigma \ge \tr \rho \log 
\rho - \alpha,$ and 
\bea
S(\rho||\sigma) &:=& \tr \rho \log \rho - \tr \rho \log \sigma \nonumber\\
&\le &  \tr \rho \log \rho - \tr \rho \log \rho + \alpha\nonumber\\
&=&   D_{\max}(\rho||\sigma).
\eea
\medskip

\noindent
We next prove the bound $D_{\min}(\rho||\sigma) \le S(\rho||\sigma)$:

Consider the CPTP map, ${\cal{T}}$, defined by
$${\cal{T}}(\omega) = \pi_\rho \omega \pi_\rho + \barpi \omega \barpi,$$
where $\omega$ is any density matrix, $\pi_\rho$ is the projector
onto the support of $\rho$, and ${\overline{\pi_\rho}} = I - \pi_\rho.$
Note that ${\cal{T}}(\rho)=\rho$.

Due to the monotonicity of $S(\rho||\sigma)$ under CPTP maps, we have
\bea
S(\rho||\sigma) &\ge& S({\cal{T}}(\rho)||{\cal{T}}(\sigma))\nonumber\\
&=& \tr \rho \log \rho - \tr \rho \log (\pi_\rho \sigma \pi_\rho)\nonumber\\
&=& S(\rho|| \pi_\rho \sigma \pi_\rho).
\label{r1}
\eea
Define the normalized state ${\displaystyle{\tsigma:= \frac{1}{c}\pi_\rho \sigma \pi_\rho}}$ where
$c = {\tr(\pi_\rho \sigma)}$. Then
\bea
S(\rho|| \pi_\rho \sigma \pi_\rho) &=& S(\rho || c \tsigma) \nonumber\\
&=& \tr \rho (\log \rho - \log \tsigma) - (\log c). \tr \rho\nonumber\\
&=& S(\rho||\tsigma) - \log c\nonumber\\
&\ge& - \log c = D_{\min} (\rho || \sigma).
\label{r2}
\eea   
From \reff{r1} and \reff{r2} we conclude that 
$D_{\min} (\rho || \sigma) \le S(\rho|| \sigma )$.
\end{proof}  

The following lemma is obtained easily from the definitions of the min-
and max-relative entropies. 
\begin{lemma}
\label{unitary} The min- and max- relative entropies are invariant 
under joint unitary transformations.
\end{lemma}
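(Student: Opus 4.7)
The plan is to prove both invariances directly from the defining formulas of Definition 1 and Definition 2, exploiting only two elementary facts: unitary conjugation preserves the operator partial order, and it commutes with taking the support projector (so spectral data are transported covariantly). No inequality machinery from Section~\ref{math} is actually needed.

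For the max-relative entropy, I would use the definition in \reff{dmax}. The key observation is that for any real $\lambda \ge 0$ and any unitary $U$,
\be
\rho \le \lambda \sigma \quad \Longleftrightarrow \quad U\rho U^\dagger \le \lambda\, U\sigma U^\dagger,
\ee
since conjugation by $U$ is an order-isomorphism on the self-adjoint operators. Consequently the set $\{\lambda : \rho \le \lambda\sigma\}$ is identical to $\{\lambda : U\rho U^\dagger \le \lambda\, U\sigma U^\dagger\}$, and the logarithms of their infima coincide, giving $D_{\max}(U\rho U^\dagger\|U\sigma U^\dagger) = D_{\max}(\rho\|\sigma)$.

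For the min-relative entropy, I would use \reff{dmin}. The support of $U\rho U^\dagger$ is $U(\supp\,\rho)$, so the orthogonal projector onto it is $\pi_{U\rho U^\dagger} = U\pi_\rho U^\dagger$. Cyclicity of the trace then yields
\bea
\tr\bigl(\pi_{U\rho U^\dagger}\, U\sigma U^\dagger\bigr)
&=& \tr\bigl(U\pi_\rho U^\dagger \, U\sigma U^\dagger\bigr) \nonumber\\
&=& \tr\bigl(\pi_\rho \sigma\bigr),
\eea
so that taking $-\log$ gives $D_{\min}(U\rho U^\dagger\|U\sigma U^\dagger) = D_{\min}(\rho\|\sigma)$.

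There is no real obstacle: both statements reduce to one-line manipulations once the definitions are unpacked. The only point worth flagging explicitly in the write-up is the covariance of the support projector under unitary conjugation, which one might alternatively deduce from Lemma \ref{mono} applied to the CPTP map $X \mapsto UXU^\dagger$ together with its inverse $X \mapsto U^\dagger X U$, yielding the desired equalities as a two-sided inequality.
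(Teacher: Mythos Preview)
Your proposal is correct and matches the paper's approach exactly: the paper simply states that the lemma ``is obtained easily from the definitions of the min- and max- relative entropies,'' and your two one-line arguments (order-preservation under unitary conjugation for $D_{\max}$, and covariance of the support projector plus trace cyclicity for $D_{\min}$) are precisely the direct verifications the paper has in mind.
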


\begin{lemma}
\label{lem122}
The min- relative entropy of two states $\rho$ and $\sigma$ 
for which $\supp\, \rho \subseteq \supp\, \sigma$, satisfies the following bounds:
\be
D_{\min}(\rho||\sigma) \le D_{\max}(\rho||\sigma) \le - \log \mu_{\min}(\sigma),
\label{first}
\ee
where $\mu_{\min}(\sigma)$ denotes the minimum non-zero eigenvalue of $\sigma$.
Further,
{\be
D_{\min}(\rho||\sigma) \le - \log \bigl[ 1 - \frac{1}{2} ||\rho-\sigma||_1 \bigr]
\label{two}
\ee}
\end{lemma}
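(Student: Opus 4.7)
The first inequality in (\ref{first}) is exactly the content of Lemma \ref{minmax}, so only the upper bound $D_{\max}(\rho||\sigma) \le -\log\mu_{\min}(\sigma)$ requires argument. The plan is to exhibit an explicit $\lambda$ satisfying $\rho \le \lambda\sigma$, namely $\lambda = 1/\mu_{\min}(\sigma)$, and then invoke the definition (\ref{dmax}). Since $\rho$ is a state, its non-zero eigenvalues lie in $(0,1]$, whence $\rho \le \pi_\rho$. The hypothesis $\supp\,\rho \subseteq \supp\,\sigma$ gives $\pi_\rho \le \pi_\sigma$, and $\sigma \ge \mu_{\min}(\sigma)\pi_\sigma$ gives $\pi_\sigma \le \sigma/\mu_{\min}(\sigma)$. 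Chaining these three operator inequalities yields $\rho \le \sigma/\mu_{\min}(\sigma)$, and taking logarithms completes the bound.

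For (\ref{two}), the strategy is to bound $\tr(\pi_\rho\sigma)$ from below and then apply $-\log$ (monotonicity). Since $\tr(\pi_\rho \rho) = \tr\rho = 1$, I would write
\begin{equation}
\tr(\pi_\rho \sigma) = 1 + \tr\bigl[\pi_\rho(\sigma-\rho)\bigr].
\end{equation}
To lower-bound the second term, I would apply inequality (\ref{lem12}) of Lemma \ref{lem1} with $A=\sigma$, $B=\rho$, and $P=\pi_\rho$ (which satisfies $0 \le \pi_\rho \le I$), giving
\begin{equation}
\tr\bigl[\pi_\rho(\sigma-\rho)\bigr] \ge \tr\bigl[\{\sigma \le \rho\}(\sigma-\rho)\bigr].
\end{equation}
Because $\rho$ and $\sigma$ both have unit trace, the definition of the trace norm together with $\tr(\rho-\sigma)=0$ identifies the right-hand side with $-\tfrac12\|\rho-\sigma\|_1$. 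Thus $\tr(\pi_\rho \sigma) \ge 1 - \tfrac12 \|\rho-\sigma\|_1$, and taking $-\log$ of both sides (the argument is positive provided the trace distance is strictly less than $2$, otherwise the bound is vacuous) yields (\ref{two}).

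The proof should therefore be essentially mechanical once Lemma \ref{lem1} is in hand; the only real obstacle is bookkeeping of the spectral projections $\{\sigma \le \rho\}$ versus $\{\rho \ge \sigma\}$ and the associated signs, to ensure that the output of Lemma \ref{lem1} matches the paper's stated definition of $\|\cdot\|_1$. No additional ideas beyond those already introduced in Section \ref{math} and the alternative characterization (\ref{dmax}) are required.
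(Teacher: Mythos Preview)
Your proof is correct. For the chain \reff{first} your argument is verbatim the paper's: $\rho\le\pi_\rho\le\pi_\sigma\le\mu_{\min}(\sigma)^{-1}\sigma$, then invoke \reff{dmax}.

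For \reff{two} you and the paper diverge. The paper's primary argument is operational: it interprets $\tfrac12\tr(\pi_\rho\sigma)$ as the average error probability in symmetric binary state discrimination under the POVM $\{\pi_\rho,\,I-\pi_\rho\}$ (for which the Type~I error vanishes), and then invokes Helstrom's theorem to lower-bound any such error probability by $\tfrac12\bigl[1-\tfrac12\|\rho-\sigma\|_1\bigr]$. Your route is the purely algebraic one --- write $\tr(\pi_\rho\sigma)=1+\tr[\pi_\rho(\sigma-\rho)]$, apply \reff{lem12} of Lemma~\ref{lem1}, and identify $\tr[\{\sigma\le\rho\}(\sigma-\rho)]=-\tfrac12\|\rho-\sigma\|_1$ via $\tr(\rho-\sigma)=0$. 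The paper in fact remarks at the end of its proof that \reff{bdd3} ``can also be proved algebraically by a simple use of Lemma~\ref{lem1},'' which is precisely what you have done. Your version is shorter and self-contained within Section~\ref{math}; the paper's version has the advantage of exposing the operational meaning of $D_{\min}$ in state discrimination, which is one of the themes of the article.
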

\begin{proof}
The first inequality in \reff{first}
has been proved in Lemma \ref{minmax}.
Since $\supp\, \rho \subseteq \supp\, \sigma$, we have $\pi_\rho \le \pi_\sigma$,
where $\pi_\rho$ and $\pi_\sigma$, denote the projectors onto the supports of 
$\rho$ and $\sigma$ respectively. Further, using the bounds $\rho \le \pi_\rho$ and
$\pi_\sigma \le {\mu_{\min}(\sigma)}^{-1} \sigma$, where $\mu_{\min}(\sigma)$ denotes 
the minimum non-zero eigenvalue of $\sigma$,
we obtain
$$ \rho \le {\mu_{\min}(\sigma)}^{-1} \sigma.$$
Using the definition of $D_{\max}(\rho||\sigma)$ we therefore infer that 
$D_{\max}(\rho||\sigma) \le - \log \mu_{\min}(\sigma)$.

The bound \reff{two} follows from the fact that
\be
\tr (\pi_\rho\sigma) \ge 1 - \frac{1}{2} ||\rho-\sigma||_1,
\label{bdd3}
\ee
which can be seen as follows. Consider the scenario of state discrimination. Suppose it
is known that that a finite quantum system is in one of two states 
$\rho$ and $\sigma$, with equal apriori probability. To determine which state
it is in, one does a binary Postive Operator-Valued Measurement (POVM) with elements $E_1$ and $E_2$,
and $E_1 + E_2= I$. If the outcome corresponding to $E_1$ occurs then the system is inferred to be in the state
$\rho$, whereas if the outcome corresponding to $E_2$ occurs, then the system is inferred to be in the state
$\sigma$. The average probability of error in state discrimination is given by 
\bea
p_e^{av} &=& 1 - \frac{1}{2}( \tr E_1 \rho +\tr E_2 \sigma) \nonumber\\
&=& \frac{1}{2}( \tr E_2 \rho +\tr E_1 \sigma)
\label{hel}
\eea
Note that the two terms in the parenthesis, in the last line of \reff{hel} are, 
respectively, the {\em{Type I error}} and the {\em{Type II error}},
in the language of hypothesis testing.
By Helstrom's Theorem \cite{Helstrom} the minimum possible value of $p_e^{av}$ is given by
$$ \frac{1}{2}\bigl[1 - \frac{1}{2} ||\rho - \sigma||_1\bigr].$$
Now consider a POVM in which $E_1 = \pi_\rho$ (the projector onto the support of $\rho$) and $E_2 = I - \pi_\rho$.
In this case, the {\em{Type I error}} vanishes and $p_e^{av}= \frac{1}{2} \tr (\pi_\rho \sigma)$, which by Helstrom's
theorem satisfies the bound:
$$\frac{1}{2} \tr (\pi_\rho \sigma) \ge \frac{1}{2}\bigl[1 - \frac{1}{2} ||\rho - \sigma||_1\bigr],$$
hence yielding the desired bound \reff{bdd3}. Note that $D_{\min}(\rho||\sigma)$ is therefore 
related to the average probability of error of state discrimination 
(between the states $\rho$ and $\sigma$) when the Type I error vanishes.

Note that \reff{bdd3} can also be proved algebraically by a simple use of Lemma \ref{lem1}.
 \end{proof}

It is known that if one has asymptotically many copies
of two states $\rho$ and $\sigma$, the error in discriminating between them decreases 
exponentially, and the error exponent is given by the so-called
quantum Chernoff bound $\xi(\rho,\sigma)$ \cite{chernoff}. The latter has been shown to be given by
$\xi(\rho,\sigma)  = - \log \bigl( \min_{0\le s\le 1} \tr \rho^s \sigma^{1-s}\bigr)$.
From the expression \reff{relren} it follows that the min-relative entropy provides a lower bound 
to the quantum Chernoff bound, i.e.,
$$\xi(\rho, \sigma) \ge D_{\min}(\rho||\sigma).$$

\section{A new entanglement monotone}
\label{entm}
Here we introduce a new entanglement monotone for a bipartite 
state $\rho$. We denote it by $E_{\max}(\rho)$ and call it the
{\em{max-relative entropy of entanglement}}. 

In \cite{VP}, Vedral and Plenio proved a set of sufficient conditions 
under which
a measure $D(\rho||\sigma)$ of the ``distance'' between
two bipartite quantum states, $\rho$ and $\sigma$,
defines an entanglement monotone, $E(\rho)$, through the following
expression:
$$  E(\rho):= \min_{\sigma \in {\cal{D}}} D(\rho||\sigma).$$
Here the minimum is taken over the set ${\cal{D}}$ of all separable states
(see also \cite{horo}).
The conditions ensure that $(i)$ $E(\rho)=0$ if and only if $\rho$ is
separable;  $(ii)$ $E(\rho)$ is unchanged by a local change of basis and 
 $(iii)$ $E(\rho)$ does not increase on average under local operations 
and classical communication (LOCC) \footnote{If $\Lambda$ denotes a LOCC map
and $\{p_i, \rho_i\}$ denotes an ensemble of states
such that $\rho_i$ is obtained with probability $p_i$ under the action
of $\Lambda$ on $\rho$, then $E(\rho) \ge 
\sum_i p_i E(\rho_i).$}.
By proving that these conditions are satisfied by 
$D_{\max} (\rho||\sigma)$, 
we are able to define a new
entanglement monotone.
\begin{theorem}
\label{ent}
For a bipartite state $\rho$, the quantity 
\be E_{\max}(\rho):= \min_{\sigma \in {\cal{D}}} D_{\max} (\rho||\sigma),
\label{entmeasure}
\ee
where the minimum is taken over the set  ${\cal{D}}$ 
of all separable states,
is an entanglement monotone. \end{theorem}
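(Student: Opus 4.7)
The plan is to verify the Vedral--Plenio sufficient conditions for $D_{\max}(\cdot||\cdot)$, namely: (i) $E_{\max}(\rho)=0$ if and only if $\rho$ is separable; (ii) $E_{\max}$ is invariant under local unitaries; (iii) $E_{\max}$ is non-increasing on average under LOCC. All three conditions will be established directly from the definition \reff{entmeasure} together with the properties of $D_{\max}$ already proved in Section \ref{props}.

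For (i), non-negativity is immediate from Lemma \ref{lem6}, so $E_{\max}(\rho)\ge 0$. If $\rho$ is separable, then choosing $\sigma=\rho$ in \reff{entmeasure} yields $D_{\max}(\rho||\rho)=0$, and hence $E_{\max}(\rho)=0$. Conversely, the set $\mathcal{D}$ of separable states is compact in finite dimension, and $\sigma \mapsto D_{\max}(\rho||\sigma)$ is lower semi-continuous, so the infimum is attained at some $\sigma^{*}\in\mathcal{D}$. If $E_{\max}(\rho)=0$ then $\rho\le\sigma^{*}$, and since both are unit-trace states this forces $\rho=\sigma^{*}\in\mathcal{D}$. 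For (ii), if $U=U_A\otimes U_B$ is a local unitary, then conjugation by $U$ bijectively preserves $\mathcal{D}$, and Lemma \ref{unitary} gives $D_{\max}(U\rho U^\dagger || U\sigma U^\dagger)=D_{\max}(\rho||\sigma)$, yielding $E_{\max}(U\rho U^\dagger)=E_{\max}(\rho)$.

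The main step is (iii). I would represent a general LOCC map $\Lambda$ via its measurement decomposition, with local Kraus operators $K_i$ (so $\sum_i K_i^\dagger K_i = I$) producing the post-measurement ensemble $\{p_i,\rho_i\}$ with $p_i=\tr(K_i\rho K_i^\dagger)$ and $\rho_i=K_i\rho K_i^\dagger /p_i$. Pick a minimizer $\sigma^{*}\in\mathcal{D}$ with $\lambda^{*}:=D_{\max}(\rho||\sigma^{*})=E_{\max}(\rho)$, and apply the same Kraus operators to $\sigma^{*}$ to obtain $q_i=\tr(K_i\sigma^{*}K_i^\dagger)$ and $\sigma_i=K_i\sigma^{*}K_i^\dagger/q_i$. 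Because LOCC preserves separability, each $\sigma_i\in\mathcal{D}$. From $\rho\le 2^{\lambda^{*}}\sigma^{*}$, conjugation by $K_i$ gives $p_i\rho_i\le 2^{\lambda^{*}}q_i\sigma_i$, so
\be
D_{\max}(\rho_i||\sigma_i)\le \lambda^{*}+\log(q_i/p_i),
\ee
and consequently $E_{\max}(\rho_i)\le \lambda^{*}+\log(q_i/p_i)$. Averaging with weights $p_i$ and invoking non-negativity of the classical relative entropy, $\sum_i p_i\log(p_i/q_i)\ge 0$, yields
\be
\sum_i p_i E_{\max}(\rho_i)\le \lambda^{*}+\sum_i p_i\log(q_i/p_i)\le \lambda^{*}=E_{\max}(\rho),
\ee
as required.

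The main obstacle is the LOCC step, because $D_{\max}$ is not jointly convex (Lemma \ref{rf4} provides only a maximum, not a convex average), so one cannot mimic the Vedral--Plenio argument for Umegaki's relative entropy verbatim. The resolution is the above observation that the operator inequality defining $D_{\max}$ is preserved by the Kraus action $K_i(\cdot)K_i^\dagger$, with a correction factor $q_i/p_i$ that averages out favourably. Minor technical points to handle carefully are the existence of a minimizer $\sigma^{*}$ (finite-dimensional compactness plus lower semi-continuity), and the degenerate outcomes where $p_i=0$ or $q_i=0$, which can be excluded from the average with no loss.
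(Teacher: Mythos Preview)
Your proof is correct and captures the same essential mechanism as the paper's, but you package it differently. The paper proceeds by verifying all six Vedral--Plenio sufficient conditions on the distance measure $D_{\max}(\cdot\|\cdot)$ (positivity with equality iff equal arguments, unitary invariance, partial-trace monotonicity, the weighted-sum inequality $\sum_i\alpha_i D_{\max}(\rho_i/\alpha_i\|\sigma_i/\beta_i)\le\sum_i D_{\max}(\rho_i\|\sigma_i)$, block-diagonal additivity, and stability under tensoring with a projector), and then invokes the Vedral--Plenio theorem. You instead bypass that framework and establish the three monotone properties of $E_{\max}$ directly. The mathematical core of the LOCC step is identical in both: the operator inequality $\rho\le 2^{\lambda^*}\sigma^*$ survives conjugation by Kraus operators, producing the correction $\log(q_i/p_i)$, which then averages away via non-negativity of the classical relative entropy $\sum_i p_i\log(p_i/q_i)\ge 0$. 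This is exactly the computation behind the paper's fourth Vedral--Plenio condition. Your direct route is arguably cleaner and more self-contained, since it avoids checking the auxiliary conditions (partial-trace monotonicity, block-diagonal additivity, tensor stability) that the Vedral--Plenio machinery requires.

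One small point worth tightening: your phrase ``local Kraus operators $K_i$'' is a slight under-description of a general multi-round LOCC protocol. The argument goes through verbatim once you either fine-grain the protocol so that each classical outcome $i$ is associated with a single Kraus operator that is a product of local operators, or, more abstractly, replace $K_i(\cdot)K_i^\dagger$ by an arbitrary separable-preserving completely positive branch map $\mathcal{E}_i$ and use that positivity of $\mathcal{E}_i$ sends $\rho\le 2^{\lambda^*}\sigma^*$ to $\mathcal{E}_i(\rho)\le 2^{\lambda^*}\mathcal{E}_i(\sigma^*)$.
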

\begin{proof} 
It suffices to prove that the max-relative entropy 
$D_{\max} (\rho||\sigma)$ satisfies the following properties,
which constitute the set of sufficient conditions 
proved in \cite{VP}. 
\begin{itemize}
\item{$D_{\max} (\rho||\sigma) \ge 0$ with equality if and only if 
$\rho=\sigma$.}
\item{$D_{\max} (\rho||\sigma) = D_{\max} (U\rho U^\dagger||U\sigma U^\dagger)$
for any unitary operator $U$.}
\item{$D_{\max} (\tr_p\rho||\tr_p\sigma)\le D_{\max} (\rho||\sigma)$, where 
$\tr_p$ denotes a partial trace.}
\item{$\sum_i \alpha_i D_{\max} (\rho_i/\alpha_i||\sigma_i/\beta_i)
\le \sum_i D_{\max} (\rho_i||\sigma_i),$ where $\alpha_i = \tr \rho_i$, 
$\beta_i = \tr \sigma_i$, $\rho_i = V_i \rho V_i^\dagger$,
$\sigma_i = V_i \sigma V_i^\dagger$ and $\sum_i V_i^\dagger V_i =1$.}
\item{For any set $\{P_i\}$ of mutually orthogonal projectors, i.e.,
$P_iP_j= \delta_{ij}P_i$, 
\bea
&&
D_{\max} \bigl(\sum_iP_i\rho P_i||\sum_i P_i\sigma P_i\bigr)\nonumber\\
&=& \sum_i D_{\max} \bigl(P_i\rho P_i||P_i\sigma P_i\bigr).
\eea}
\item{For any projector $P$,
$$D_{\max} \bigl(\rho \otimes P||\sigma \otimes P\bigr)
= D_{\max} (\rho||\sigma).$$}
\end{itemize}
From Lemma \ref{lem6} we have that if $\rho=\sigma$ then
$D_{\max} (\rho||\sigma)=0$. To prove the converse, i.e., 
$D_{\max} (\rho||\sigma)=0$ implies that  $\rho=\sigma$, note that
$D_{\max} (\rho||\sigma)=0$ $\implies$ $(\sigma - \rho) \ge 0$. On  
the other hand, since $\rho$ and $\sigma$ are states, we have 
$\tr(\sigma-\rho)=0$, which in turn implies that $(\sigma-\rho)= 0$.
This completes the proof of the first property.
 The second property follows from Lemma 
\ref{unitary}. The third property follows from Lemma \ref{mono},
since the partial trace is a CPTP map. 

The fourth property can be proved as follows. Using the definition
\reff{dmax2} of the max-relative entropy, we have 
\bea
&&
\sum_i \alpha_i D_{\max} (\rho_i/\alpha_i||\sigma_i/\beta_i)\nonumber\\
&=&
\sum_i \alpha_i \log \Bigl[\mu_{\max}\Bigl(
\Bigl(\frac{\sigma_i}{\beta_i}\Bigr)^{-\frac{1}{2}} \frac{\rho_i}{\alpha_i} \Bigl(\frac{\sigma_i}{\beta_i}\Bigr)^{-\frac{1}{2}} \Bigr)\Bigr]\nonumber\\
&=& \sum_i \alpha_i \log \Bigl(\frac{\beta_i}{\alpha_i}\mu_{\max} \Bigl(
\sigma_i^{-\frac{1}{2}} \rho_i \sigma_i^{-\frac{1}{2}}\Bigr)\Bigr)\nonumber\\
&=& \sum_i \alpha_i \log \frac{\beta_i}{\alpha_i} + 
\sum_i \alpha_i \log \mu_{\max} \Bigl(
\sigma_i^{-\frac{1}{2}}\rho_i \sigma_i^{-\frac{1}{2}}\Bigr)\nonumber\\
&\le& -\sum_i \alpha_i \log \frac{\alpha_i}{\beta_i}+ \sum_i \log \mu_{\max} \Bigl(
\sigma_i^{-\frac{1}{2}}\rho_i \sigma_i^{-\frac{1}{2}}\Bigr)\nonumber\\
&\le & \sum_i \log \mu_{\max} \Bigl(
\sigma_i^{-\frac{1}{2}}\rho_i \sigma_i^{-\frac{1}{2}}\Bigr)\nonumber\\
&=& \sum_i D_{\max}(\rho_i||\sigma_i).
\eea
In the above we have made use of the fact that $\{\alpha_i\}$ and
$\{\beta_i\}$ are probability distributions, and hence 
${\displaystyle{\sum_i \alpha_i \log \frac{\alpha_i}{\beta_i}\ge 0}}$ \cite{cover}.
The fifth and sixth properties are seen to hold by inspection.
\end{proof}

\begin{lemma}
\label{upent}
The {\em{max-relative entropy of 
entanglement}}, $E_{\max}$,
is an upper bound to the {\em{relative entropy of entanglement}} \cite{VP}
$$E(\rho) :=\min_{\sigma \in {\cal{D}}} S (\rho||\sigma).$$
\end{lemma}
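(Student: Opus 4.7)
The plan is to prove this as an immediate consequence of Lemma \ref{best}, which establishes the pointwise inequality $S(\rho\|\sigma) \le D_{\max}(\rho\|\sigma)$ for any pair of states. The minimization in both $E_{\max}$ and $E$ is taken over the same set $\mathcal{D}$ of separable states, so the bound should transfer directly.

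Concretely, I would let $\sigma^\star \in \mathcal{D}$ be a separable state that achieves the minimum in the definition of $E_{\max}(\rho)$, i.e.\ $E_{\max}(\rho) = D_{\max}(\rho\|\sigma^\star)$. (Such a minimizer exists because $\mathcal{D}$ is compact and $D_{\max}$ is lower semicontinuous in its second argument wherever it is finite; if one wishes to avoid this, one can argue with an infimizing sequence and pass to a convergent subsequence, but for separable $\sigma$ with $\supp\rho \subseteq \supp\sigma$ the minimum is attained.) Then chain the inequalities
\begin{equation}
E(\rho) \;=\; \min_{\sigma \in \mathcal{D}} S(\rho\|\sigma) \;\le\; S(\rho\|\sigma^\star) \;\le\; D_{\max}(\rho\|\sigma^\star) \;=\; E_{\max}(\rho),
\end{equation}
where the first inequality uses that $\sigma^\star \in \mathcal{D}$ is a feasible point for the minimization defining $E(\rho)$, and the second inequality is the right-hand bound of Lemma \ref{best}.

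There is essentially no obstacle here beyond the mild technical point of ensuring that the minimum defining $E_{\max}(\rho)$ is attained (or, equivalently, that one can replace the $\min$ by an $\inf$ and take limits). Since Lemma \ref{best} requires $\rho$ and $\sigma$ both to be states with $\supp\rho \subseteq \supp\sigma$, and the restriction of the minimization to separable $\sigma$ satisfying $\supp\rho \subseteq \supp\sigma$ does not change the value of $E_{\max}(\rho)$ (outside this set $D_{\max}$ is infinite), the applicability of the lemma is guaranteed. Thus the argument reduces to a one-line chain of inequalities together with the observation that both entanglement measures optimize over the same compact set.
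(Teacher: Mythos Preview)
Your proposal is correct and mirrors the paper's proof almost exactly: the paper also picks a minimizer $\sigma^*$ for $E_{\max}(\rho)$ and then applies Lemma~\ref{best} to get $E_{\max}(\rho) = D_{\max}(\rho\|\sigma^*) \ge S(\rho\|\sigma^*) \ge E(\rho)$. The only difference is that you add a brief justification for the existence of the minimizer, which the paper simply assumes.
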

\begin{proof}
Let $\sigma^*$ be a separable state such that
\bea
D_{\max}(\rho|| \sigma^*)&=&
\min_{\sigma \in {\cal{D}}} D(\rho|| \sigma)\nonumber\\
&=&E_{\max}(\rho)
\eea
By Lemma \ref{best} we have
\bea
D_{\max}(\rho|| \sigma^*)&\ge& S(\rho|| \sigma^*)\nonumber\\
&\ge& \min_{\sigma \in {\cal{D}}} S(\rho|| \sigma)\nonumber\\
&=& E(\rho).
\eea
\end{proof}

Properties of $ E_{\max}(\rho)$ will be investigated
in a forthcoming paper.

\section{Smooth min- and max- relative entropies}
\label{smooth}
\emph{Smooth} min- and max- relative entropies are generalizations of the above-mentioned  
relative entropy measures, involving an additional \emph{smoothness} parameter
$\eps \geq 0$. For $\eps = 0$, they reduce to the
\emph{non-smooth} quantities 
\begin{definition} \label{def:smoothentropies} For
  any $\eps \geq 0$, the \emph{$\eps$-smooth min-} and
  \emph{max-relative entropies} of a bipartite state $\rho$ relative to a
  state $\sigma$ are defined by
\[
    D_{\min}^{\eps}(\rho || \sigma)
  :=
    \sup_{\bar{\rho} \in B^{\eps}(\rho)} D_{\min}(\bar{\rho} || \sigma)
  \]
  and
  \[
    D_{\max}^{\eps}( \rho|| \sigma )
  :=
    \inf_{\bar{\rho} \in B^{\eps}(\rho)} D_{\max}(\bar{\rho} || \sigma)
  \]
  where $B^{\eps}(\rho) := \{\bar{\rho} \geq 0: \, \| \bar{\rho} - \rho
  \|_1 \leq \eps, \tr(\bar{\rho}) \leq \tr(\rho)\}$.
\end{definition}

The following two lemmas are used in the proof of Theorem \ref{thm_divmax} given below.  

\begin{lemma} \label{lem5n}
  Let $\rho_{A B}$ and  $\sigma_{AB}$ be density operators, let $\Delta_{A B}$ be a positive operator, and let $\lambda \in \bbR$  such that
   \[
    \rho_{A B} \leq 2^{\lambda} \cdot \sigma_{AB} + \Delta_{A B} \ .
  \]
  Then $D_{\max}^{\eps}(\rho_{A B}||\sigma_{AB}) \le \lambda$ for any $\eps \geq \sqrt{8 \tr(\Delta_{A B})}$.
\end{lemma}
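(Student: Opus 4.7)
The strategy is to exhibit an operator $\bar\rho$ in the ball $B^\eps(\rho_{AB})$ satisfying the operator inequality $\bar\rho \leq 2^\lambda\sigma_{AB}$. By the definition of $D_{\max}$ in~\reff{dmax} and Definition~\ref{def:smoothentropies} of $D_{\max}^\eps$, the existence of such a $\bar\rho$ immediately forces $D_{\max}(\bar\rho\|\sigma_{AB}) \leq \lambda$ and hence $D_{\max}^\eps(\rho_{AB}\|\sigma_{AB}) \leq \lambda$. The natural candidate for $\bar\rho$ is a ``quantum minimum'' of $\rho_{AB}$ and $2^\lambda\sigma_{AB}$ obtained by a gentle-measurement-style cut-off, and the main analytic tool for bounding $\|\rho_{AB}-\bar\rho\|_1$ is Lemma~\ref{gm}.

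I first use Lemma~\ref{lem1} to bound the ``trace excess'' of $\rho_{AB}$ above $2^\lambda\sigma_{AB}$. Letting $P := \{\rho_{AB} \geq 2^\lambda\sigma_{AB}\}$ denote the spectral projector onto the positive eigenspace of $\rho_{AB} - 2^\lambda\sigma_{AB}$ and invoking the hypothesis $\rho_{AB} - 2^\lambda\sigma_{AB} \leq \Delta_{AB}$, Lemma~\ref{lem1} gives
\[
\tr\bigl[(\rho_{AB} - 2^\lambda\sigma_{AB})_+\bigr] = \tr\bigl[P(\rho_{AB} - 2^\lambda\sigma_{AB})\bigr] \leq \tr(P\Delta_{AB}) \leq \tr\Delta_{AB}.
\]
I then introduce the quantum cut-off $M := \min(I,\,N)$, where $N := 2^\lambda\rho_{AB}^{-1/2}\sigma_{AB}\rho_{AB}^{-1/2}$ (using generalised inverses on $\supp\rho_{AB}$), and take the candidate
\[
\bar\rho := \rho_{AB}^{1/2}\,M\,\rho_{AB}^{1/2}.
\]
The operator inequality $M \leq N$, conjugated by $\rho_{AB}^{1/2}$, together with the identity $\rho_{AB}^{1/2}N\rho_{AB}^{1/2} = 2^\lambda\sigma_{AB}$, yields the required $\bar\rho \leq 2^\lambda\sigma_{AB}$; positivity of $\bar\rho$ and the inequality $\tr\bar\rho \leq \tr\rho_{AB}$ both follow from $0 \leq M \leq I$.

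The remaining and most delicate task is to establish the trace-distance bound $\|\rho_{AB} - \bar\rho\|_1 \leq \sqrt{8\tr\Delta_{AB}} = \eps$. Because $\bar\rho \leq \rho_{AB}$, this distance equals $\tr(\rho_{AB}(I-M)) = \tr((I - N)_+\,\rho_{AB})$. The appearance of $\sqrt 8 = 2\sqrt 2$ in the conclusion strongly suggests applying the gentle-measurement Lemma~\ref{gm} to the POVM element $M$, which produces a bound of the form $2\sqrt{\,\tr((I-N)_+\rho_{AB})\,}$ and would deliver the required $\sqrt{8\tr\Delta_{AB}}$ provided the ``acceptance deficit'' satisfies $\tr((I-N)_+\rho_{AB}) \leq 2\tr\Delta_{AB}$. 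The main obstacle is precisely this last estimate: the Step-1 bound controls $\tr[(\rho_{AB}-2^\lambda\sigma_{AB})_+]$, whereas the required inequality concerns $(I-N)_+$, and the two positive parts correspond to operators related by the non-unitary similarity $\rho_{AB}^{1/2}(I-N)\rho_{AB}^{1/2} = \rho_{AB}-2^\lambda\sigma_{AB}$; reconciling them—with the additional factor of $2$ being the cost of this transfer when $\rho_{AB}$ and $\sigma_{AB}$ do not commute—is the key technical step that closes the proof.
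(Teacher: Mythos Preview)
Your proposal contains a genuine gap that cannot be closed: the candidate $\bar\rho=\rho_{AB}^{1/2}M\rho_{AB}^{1/2}$, with $M=\min(I,N)$ and $N=2^{\lambda}\rho_{AB}^{-1/2}\sigma_{AB}\rho_{AB}^{-1/2}$, does \emph{not} lie in $B^{\eps}(\rho_{AB})$ for $\eps=\sqrt{8\tr\Delta_{AB}}$ in general, and the ``key technical step'' you leave open---bounding $\tr\bigl[(I-N)_{+}\rho_{AB}\bigr]$ by a universal constant times $\tr\Delta_{AB}$---is in fact false. Here is a two--dimensional counterexample. Take $\lambda=0$, $\rho=\mathrm{diag}(p,1-p)$ with $0<p<1$, and
\[
\sigma=\begin{pmatrix}p & -\sqrt{p(1-p)}\\ -\sqrt{p(1-p)} & 1-p\end{pmatrix},
\]
a pure state. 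Then $N=\rho^{-1/2}\sigma\rho^{-1/2}=\bigl(\begin{smallmatrix}1&-1\\-1&1\end{smallmatrix}\bigr)$, so $(I-N)_{+}=\tfrac12\bigl(\begin{smallmatrix}1&1\\1&1\end{smallmatrix}\bigr)$ and hence $\|\rho-\bar\rho\|_{1}=\tr[(I-N)_{+}\rho]=\tfrac12$ for \emph{every} $p$. On the other hand the minimal $\Delta$ in the hypothesis has $\tr\Delta=\tr[(\rho-\sigma)_{+}]=\sqrt{p(1-p)}\to 0$ as $p\to 0$, so $\sqrt{8\tr\Delta}\to 0$ while $\|\rho-\bar\rho\|_{1}$ stays at $\tfrac12$. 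Thus no constant (not even with the extra factor $2$ you anticipate) can save the estimate, and neither can passing through the square root via Lemma~\ref{gm}. (As a secondary issue, your inequality $\bar\rho\le 2^{\lambda}\sigma_{AB}$ is also fragile: conjugating $M\le N$ by $\rho_{AB}^{1/2}$ only yields $\bar\rho\le 2^{\lambda}\pi_{\rho}\sigma_{AB}\pi_{\rho}$, and $\pi_{\rho}\sigma\pi_{\rho}\le\sigma$ need not hold.)

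The paper's proof avoids this by conjugating with a different contraction: with $\alpha:=2^{\lambda}\sigma_{AB}$, $\beta:=2^{\lambda}\sigma_{AB}+\Delta_{AB}$ and $T:=\alpha^{1/2}\beta^{-1/2}$, it takes $\rho'=T\rho_{AB}T^{\dagger}$. The bound $\rho'\le 2^{\lambda}\sigma_{AB}$ follows directly from $\rho_{AB}\le\beta$, and the closeness $\|\rho_{AB}-\rho'\|_{1}\le\sqrt{8\tr\Delta_{AB}}$ is obtained not from Lemma~\ref{gm} but via a purification/fidelity argument: purify $\rho_{AB}$ to $\ket{\Psi}$, set $\ket{\Psi'}=(T\otimes I)\ket{\Psi}$, show $1-|\langle\Psi|\Psi'\rangle|\le\tr(\beta)-\tr(\alpha^{1/2}\beta^{1/2})\le\tr\Delta_{AB}$ using operator monotonicity of the square root, and convert to trace distance via~\reff{fidelity}. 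The essential point is that $T$ is built from $\sigma_{AB}$ and $\Delta_{AB}$ rather than from $\rho_{AB}^{\pm 1/2}$; this is exactly what makes the trace--distance estimate uniform in $\rho_{AB}$, which your construction is not.
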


\begin{lemma} \label{lem6n}
  Let $\rho_{A B}$ and $\sigma_{AB}$ be density operators. Then
  \[
    D_{\max}^{\eps}(\rho_{A B}||\sigma_{AB}) \le \lambda
  \]
  for any $\lambda \in \bbR$ and
  \[
    \eps = \sqrt{8 \tr\bigl[\{\rho_{A B} > 2^{\lambda} \sigma_{AB} \} \rho_{A B} \bigr]} \ .
  \]
\end{lemma}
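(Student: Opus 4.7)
The plan is to obtain Lemma \ref{lem6n} as a direct corollary of Lemma \ref{lem5n}, by producing an explicit positive operator $\Delta_{A B}$ realising the hypothesis of the latter.

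First, set $X := \rho_{A B} - 2^{\lambda}\sigma_{A B}$ and let $P := \{\rho_{A B} > 2^{\lambda}\sigma_{A B}\} = \{X > 0\}$. Decompose $X$ into its positive and negative parts, $X = X_+ - X_-$, where $X_+, X_- \geq 0$ have orthogonal supports and $P$ is the projector onto $\mathrm{supp}(X_+)$. Take $\Delta_{A B} := X_+$. Rearranging $X = X_+ - X_-$ yields
\[
  \rho_{A B} = 2^{\lambda}\sigma_{A B} + X_+ - X_- \leq 2^{\lambda}\sigma_{A B} + \Delta_{A B},
\]
since $X_- \geq 0$. So the hypothesis of Lemma \ref{lem5n} is satisfied with this $\Delta_{A B}$ and this $\lambda$.

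Second, I would bound $\tr(\Delta_{A B})$. Since $X_+$ is supported on $P$ and $X_-$ is supported on $I - P$, we have $\tr(X_+) = \tr(P X) = \tr(P \rho_{A B}) - 2^{\lambda} \tr(P \sigma_{A B})$. Because $P \geq 0$ and $\sigma_{A B} \geq 0$, the subtracted term is non-negative, giving
\[
  \tr(\Delta_{A B}) = \tr(X_+) \leq \tr\bigl[\{\rho_{A B} > 2^{\lambda}\sigma_{A B}\}\,\rho_{A B}\bigr].
\]

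Finally, applying Lemma \ref{lem5n} with this $\Delta_{A B}$, one concludes $D_{\max}^{\eps}(\rho_{A B}||\sigma_{A B}) \leq \lambda$ whenever $\eps \geq \sqrt{8\tr(\Delta_{A B})}$, which is implied by $\eps = \sqrt{8\,\tr\bigl[\{\rho_{A B} > 2^{\lambda}\sigma_{A B}\}\,\rho_{A B}\bigr]}$. There is no real obstacle here; the only point worth care is the justification of the Jordan decomposition $X = X_+ - X_-$ and the identification $X_+ = P X P = P X$ on the spectral level, which is standard for self-adjoint operators in finite dimension.
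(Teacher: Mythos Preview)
Your proof is correct and follows essentially the same approach as the paper: both take the Jordan decomposition of $\rho_{AB} - 2^{\lambda}\sigma_{AB}$, set $\Delta_{AB}$ equal to its positive part, and then bound $\tr(\Delta_{AB}) \le \tr(P\rho_{AB})$ before invoking Lemma~\ref{lem5n}. The only cosmetic difference is that the paper obtains this trace bound via the operator inequality $P\rho_{AB}P \ge \Delta_{AB}^{+}$, whereas you compute $\tr(X_+)=\tr(PX)=\tr(P\rho_{AB})-2^{\lambda}\tr(P\sigma_{AB})$ directly; both are equally valid.
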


The proofs of these lemmas are analogous to the proofs of Lemmas 5 and 6 of \cite{ndrr}
and are given in the Appendix for completeness.

\section{Spectral divergence rates}
\label{spectral}
In the quantum information spectrum approach one defines spectral divergence
rates, defined below, which can also be viewed as generalizations of the quantum relative entropy.
\begin{definition}
Given a sequence of states $\hrho=\{\rho_n\}_{n=1}^\infty$ and a sequence of positive operators
$\hsigma=\{\sigma_n\}_{n=1}^\infty$,
the quantum spectral sup-(inf-)divergence rates are defined in terms
of the difference operators $\Pi_n(\gamma) = \rho_n - 2^{n\gamma}\sigma_n$ as
\begin{align}
\overline{D}(\hrho \| \hsigma) &:= \inf \Big\{ \gamma : \limsup_{n\rightarrow \infty} \mathrm{Tr}\big[ \{ \Pi_n(\gamma) \geq 0 \} \Pi_n(\gamma) \big] = 0 \Big\} \label{odiv} \\
\underline{D}(\hrho \| \hsigma) &:= \sup \Big\{ \gamma : \liminf_{n\rightarrow \infty} \mathrm{Tr}\big[ \{ \Pi_n(\gamma) \geq 0 \}
\Pi_n(\gamma) \big] = 1 \Big\} \label{udiv}
\end{align}
respectively.
\end{definition}
Although the use of sequences of states
allows for immense freedom in choosing them,
there remain a number of basic properties of the quantum spectral divergence
rates that hold for all sequences. These are stated and proved in
\cite{bd1}.  In the {i.i.d.} case the sequence is
generated from product states $\rho = \{ \varrho^{\otimes n}
\}_{n=1}^{\infty}$, which is used to relate the spectral entropy rates for the
sequence $\rho$ to the entropy of a single state $\varrho$.

Note that the above definitions of the spectral divergence rates
differ slightly from those originally given in (38) and (39) of
\cite{hayashi03}. However, they are equivalent, as stated in the
following two propositions \footnote{Note that in \cite{bd1} and 
\cite{hayashi03}, the logarithm was taken to base $e$,
whereas here we take the logarithm to base $2$.}. For their proofs see
\cite{bd1} or \cite{ndrr}. 
\begin{proposition}
\label{equiv_sup}
The spectral sup-divergence rate $\overline{D}(\hrho\| \hsigma)$ is equal to
\begin{equation}
\overline{\mathcal{D}}(\hrho\|\hsigma) := \inf \Big\{ \alpha : \limsup_{n\rightarrow
\infty} \mathrm{Tr}\big[ \{ \rho_n \geq 2^{n\alpha}\sigma_n \} \rho_n \big] = 0
\Big\}
\label{odiv2}
\end{equation}
which is the original definition of the spectral sup-divergence rate.
Hence the two definitions are equivalent.
\end{proposition}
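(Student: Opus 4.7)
My plan is to prove the two definitions agree by showing each inclusion between the underlying sets separately, relying on Lemma \ref{lem1} together with the basic fact that the positive spectral projection $P$ of a self-adjoint operator $M$ satisfies $PMP \ge 0$ (immediate from the definition $\{M\ge 0\}:=\sum_{\lambda_i\ge 0}P_i$). Throughout I use the shorthand $P_n(\gamma):=\{\rho_n\ge 2^{n\gamma}\sigma_n\}=\{\Pi_n(\gamma)\ge 0\}$, and call $S_1$ and $S_2$ the sets appearing in \reff{odiv} and \reff{odiv2}, respectively.

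The inequality $\overline{D}(\hrho\|\hsigma)\le\overline{\mathcal{D}}(\hrho\|\hsigma)$ is the easy direction. If $\alpha\in S_2$, then $\limsup_n\mathrm{Tr}[P_n(\alpha)\rho_n]=0$, and since $\sigma_n\ge 0$ and $P_n(\alpha)\ge 0$ imply $\mathrm{Tr}[P_n(\alpha)\sigma_n]\ge 0$, I get
\[
\mathrm{Tr}[P_n(\alpha)\Pi_n(\alpha)]=\mathrm{Tr}[P_n(\alpha)\rho_n]-2^{n\alpha}\mathrm{Tr}[P_n(\alpha)\sigma_n]\le\mathrm{Tr}[P_n(\alpha)\rho_n].
\]
Hence $\limsup_n\mathrm{Tr}[P_n(\alpha)\Pi_n(\alpha)]=0$, so $\alpha\in S_1$ and $\overline{D}\le\alpha$. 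Infimizing over $\alpha\in S_2$ yields the easy half.

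The reverse bound $\overline{\mathcal{D}}\le\overline{D}$ carries the real content. Fix $\gamma\in S_1$ and any $\gamma'>\gamma$; I will show $\gamma'\in S_2$, so that $\overline{\mathcal{D}}\le\gamma'$, and then letting $\gamma'\downarrow\gamma$ and infimizing gives $\overline{\mathcal{D}}\le\overline{D}$. Applying Lemma \ref{lem1} with $A=\rho_n$, $B=2^{n\gamma}\sigma_n$ and $P=P_n(\gamma')$ gives
\[
\mathrm{Tr}[P_n(\gamma')\Pi_n(\gamma)]\le\mathrm{Tr}[P_n(\gamma)\Pi_n(\gamma)].
\]
On the other hand, since $P_n(\gamma')$ is the positive spectral projection of $\rho_n-2^{n\gamma'}\sigma_n$, the operator inequality $P_n(\gamma')(\rho_n-2^{n\gamma'}\sigma_n)P_n(\gamma')\ge 0$ holds; tracing it yields $\mathrm{Tr}[P_n(\gamma')\sigma_n]\le 2^{-n\gamma'}\mathrm{Tr}[P_n(\gamma')\rho_n]$. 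Substituting this into $\mathrm{Tr}[P_n(\gamma')\Pi_n(\gamma)]=\mathrm{Tr}[P_n(\gamma')\rho_n]-2^{n\gamma}\mathrm{Tr}[P_n(\gamma')\sigma_n]$ produces
\[
\bigl(1-2^{n(\gamma-\gamma')}\bigr)\mathrm{Tr}[P_n(\gamma')\rho_n]\le\mathrm{Tr}[P_n(\gamma)\Pi_n(\gamma)].
\]
Because $\gamma'>\gamma$, the prefactor tends to $1$, so the right-hand side vanishing in $\limsup$ forces $\limsup_n\mathrm{Tr}[P_n(\gamma')\rho_n]=0$, placing $\gamma'$ in $S_2$.

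The only subtle step I anticipate is justifying the operator bound $\mathrm{Tr}[P_n(\gamma')\sigma_n]\le 2^{-n\gamma'}\mathrm{Tr}[P_n(\gamma')\rho_n]$; here it is essential that $P_n(\gamma')$ is the \emph{positive} spectral projection of $\rho_n-2^{n\gamma'}\sigma_n$ and not some other projection, so that the sandwich is non-negative. Note that one cannot simply compare $P_n(\gamma')$ and $P_n(\gamma)$ as projections (spectral projections are not monotone in the subtracted operator), which is precisely why passing to any $\gamma'>\gamma$, rather than working at $\gamma$ directly, is required. Everything else is bookkeeping with Lemma \ref{lem1} and taking $\gamma'\to\gamma^+$ at the end.
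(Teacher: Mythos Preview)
Your argument is correct. The paper itself does not supply a proof of Proposition~\ref{equiv_sup}; it explicitly defers to \cite{bd1} and \cite{ndrr}. What you have written is essentially the standard argument one finds there: the ``easy'' direction is immediate from $\tr[P_n(\alpha)\sigma_n]\ge 0$, while the ``hard'' direction combines Lemma~\ref{lem1} with the estimate $\tr[P_n(\gamma')\sigma_n]\le 2^{-n\gamma'}\tr[P_n(\gamma')\rho_n]$ and passes from $\gamma$ to any $\gamma'>\gamma$. Your estimate is in fact a sharpening of Lemma~\ref{lem2} (which bounds the same trace by $2^{-n\gamma'}$); the usual write-up simply invokes Lemma~\ref{lem2} and obtains
\[
\tr[P_n(\gamma')\rho_n]\le \tr[P_n(\gamma)\Pi_n(\gamma)]+2^{n(\gamma-\gamma')},
\]
which is the additive analogue of your multiplicative bound $(1-2^{n(\gamma-\gamma')})\tr[P_n(\gamma')\rho_n]\le\tr[P_n(\gamma)\Pi_n(\gamma)]$. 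The two routes are equivalent; yours is self-contained rather than citing Lemma~\ref{lem2}, and your closing remark about why one cannot work at $\gamma$ directly (spectral projections are not monotone in the threshold) is a useful clarification.
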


\begin{proposition}
\label{equiv_inf}
The spectral inf-divergence rate $\underline{D}(\hrho \| \hsigma)$ is equivalent
to
\begin{equation}
\underline{\mathcal{D}}(\hrho\| \hsigma) = \sup \Big\{ \alpha :
\liminf_{n\rightarrow \infty} \mathrm{Tr}\big[ \{ \rho_n \geq 2^{n\alpha}\sigma_n
\} \rho_n \big] = 1 \Big\}
\label{udiv2}
\end{equation}
which is the original definition of the spectral inf-divergence rate.
Hence the two definitions are equivalent.
\end{proposition}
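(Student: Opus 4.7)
The plan is to prove the two suprema are equal by two inequalities, each fixing a parameter strictly below one rate and propagating it to the other. The key observation is that the spectral projectors in the two definitions are literally the same: $\{\Pi_n(\gamma)\ge 0\}=\{\rho_n\ge 2^{n\gamma}\sigma_n\}$. Denoting this common projector by $P_n(\gamma)$, the two traces differ only by a single cross term:
\[
\mathrm{Tr}\bigl[P_n(\gamma)\rho_n\bigr] \;=\; \mathrm{Tr}\bigl[P_n(\gamma)\Pi_n(\gamma)\bigr] \;+\; 2^{n\gamma}\,\mathrm{Tr}\bigl[P_n(\gamma)\sigma_n\bigr],
\]
and the entire argument reduces to controlling this cross term in the liminf.

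For the easy direction $\uD(\hrho\|\hsigma)\le \underline{\mathcal{D}}(\hrho\|\hsigma)$, I would fix any $\gamma<\uD(\hrho\|\hsigma)$, so that $\liminf_n \mathrm{Tr}[P_n(\gamma)\Pi_n(\gamma)]=1$ by definition. The identity above together with $\sigma_n\ge 0$ gives $\mathrm{Tr}[P_n(\gamma)\rho_n]\ge \mathrm{Tr}[P_n(\gamma)\Pi_n(\gamma)]$, and since $\mathrm{Tr}[P_n(\gamma)\rho_n]\le \mathrm{Tr}(\rho_n)=1$, its liminf must likewise be $1$. Thus $\gamma\le \underline{\mathcal{D}}(\hrho\|\hsigma)$, and taking the supremum closes this direction.

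For the reverse inequality, I fix $\alpha<\underline{\mathcal{D}}(\hrho\|\hsigma)$ and pick any $\gamma<\alpha$; by monotonicity of $\alpha\mapsto P_n(\alpha)$ one has $\liminf_n \mathrm{Tr}[P_n(\alpha)\rho_n]=1$. Applying Lemma \ref{lem1} with $P=P_n(\alpha)$ (which satisfies $0\le P_n(\alpha)\le I$) to the pair $A=\rho_n$, $B=2^{n\gamma}\sigma_n$ gives
\[
\mathrm{Tr}\bigl[P_n(\alpha)(\rho_n-2^{n\gamma}\sigma_n)\bigr] \;\le\; \mathrm{Tr}\bigl[P_n(\gamma)\Pi_n(\gamma)\bigr].
\]
Now I invoke Lemma \ref{lem2} at the \emph{larger} exponent $\alpha$ rather than $\gamma$, obtaining $\mathrm{Tr}[P_n(\alpha)\sigma_n]\le 2^{-n\alpha}$. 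Expanding the left hand side yields
\[
\mathrm{Tr}\bigl[P_n(\alpha)\rho_n\bigr] \;-\; 2^{n(\gamma-\alpha)} \;\le\; \mathrm{Tr}\bigl[P_n(\gamma)\Pi_n(\gamma)\bigr].
\]
Because $\gamma<\alpha$, the subtracted term vanishes as $n\to\infty$, while $\mathrm{Tr}[P_n(\alpha)\rho_n]$ has liminf $1$ by the choice of $\alpha$. Combined with the upper bound $\mathrm{Tr}[P_n(\gamma)\Pi_n(\gamma)]\le \mathrm{Tr}[P_n(\gamma)\rho_n]\le 1$, we conclude $\liminf_n \mathrm{Tr}[P_n(\gamma)\Pi_n(\gamma)]=1$, hence $\gamma\le \uD(\hrho\|\hsigma)$. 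Sending $\gamma\uparrow\alpha$ and then $\alpha\uparrow\underline{\mathcal{D}}(\hrho\|\hsigma)$ finishes the proof.

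I expect the second direction to be the main obstacle: $\mathrm{Tr}[P_n(\gamma)\sigma_n]$ cannot be shown to decay directly, since Lemma \ref{lem2} applied naively at exponent $\gamma$ gives only $\mathrm{Tr}[P_n(\gamma)\sigma_n]\le 2^{-n\gamma}$, which after multiplication by $2^{n\gamma}$ is merely $\le 1$ and useless. The resolution is the two-step trick of inserting a strictly larger exponent $\alpha>\gamma$ via Lemma \ref{lem1}, substituting the smaller projector $P_n(\alpha)$ in place of $P_n(\gamma)$, and only then applying Lemma \ref{lem2} at $\alpha$; the positive gap $\alpha-\gamma>0$ is precisely what converts the cross term into the vanishing quantity $2^{n(\gamma-\alpha)}$.
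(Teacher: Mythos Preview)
Your proof is correct. The paper does not supply its own argument here; it simply refers the reader to \cite{bd1} and \cite{ndrr}. Your route via Lemma~\ref{lem1} and Lemma~\ref{lem2} is precisely the machinery those references develop, and the two-parameter trick you isolate (apply Lemma~\ref{lem1} to pass from $P_n(\alpha)$ to $P_n(\gamma)$, then Lemma~\ref{lem2} at the larger exponent $\alpha$ to make the cross term $2^{n(\gamma-\alpha)}$ vanish) is indeed the standard way this equivalence is established.

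One minor point of phrasing: when you write ``by monotonicity of $\alpha\mapsto P_n(\alpha)$'' to conclude $\liminf_n\mathrm{Tr}[P_n(\alpha)\rho_n]=1$ from $\alpha<\underline{\mathcal{D}}(\hrho\|\hsigma)$, the projectors themselves are not obviously ordered in the non-commuting case. But you do not need this: it is enough to choose $\alpha$ directly from the defining set of the supremum (for any $\epsilon>0$ there is such an $\alpha>\underline{\mathcal{D}}-\epsilon$), and everything downstream goes through unchanged. In the first direction the analogous step is genuinely fine, since $\gamma\mapsto\mathrm{Tr}\bigl[\{\Pi_n(\gamma)\ge 0\}\Pi_n(\gamma)\bigr]=\mathrm{Tr}\bigl[(\Pi_n(\gamma))_+\bigr]$ is non-increasing by Weyl's inequality.
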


Despite these equivalences, it is useful to use the definitions
\reff{odiv} and \reff{udiv} for the divergence rates as they
allow the application of Lemmas \ref{lem1}, \ref{lem2} and \ref{lemmaCPT} in deriving
various properties of these rates.

The spectral generalizations of the von Neumann entropy, the conditional
entropy and the mutual information can all be expressed as spectral divergence rates with appropriate
substitutions for the sequence of operators $\hsigma = \{ \sigma_n
\}_{n=1}^{\infty}$.

\subsection{Definition of spectral entropy rates}

Consider a sequence of Hilbert spaces $\{{\cal{H}}_n\}_{n=1}^\infty$, with
${\cal{H}}_n = {\cal{H}}^{\otimes n}$.
For any sequence of states $\hrho=\{\rho_n\}_{n=1}^\infty$, with $\rho_n$ being a density matrix acting in the
Hilbert space ${\cal{H}}_n$, the sup- and inf- spectral entropy rates are defined as follows:
\begin{align}
\overline{S}(\hrho) &= \inf \Big\{ \gamma : \liminf_{n\rightarrow \infty} \mathrm{Tr}\big[ \{ \rho_n \geq 2^{-n\gamma} I_n\} \rho_n \big] = 1 \Big\} \label{os} \\
\underline{S}(\hrho) &= \sup \Big\{ \gamma : \limsup_{n\rightarrow \infty} \mathrm{Tr}\big[ \{ \rho_n \geq 2^{-n\gamma} I_n\} \rho_n\big] = 0 \Big\}.
\label{us}
\end{align}
Here $I_n$ denotes the identity operator acting in ${\cal{H}}_n$.
These are obtainable from the spectral divergence rates as follows [see \cite{bd1}:
\be
\overline{S}(\hrho)= - \underline{D} (\hrho|| \widehat{I})\, \,;\,\,\underline{S}(\hrho)= - \overline{D} (\hrho|| \widehat{I}),
\label{spec}\ee
where $\widehat{I} = \{I_n\}_{n=1}^\infty$ is a sequence of identity operators.

It is known that the spectral entropy rates of $\hrho$ are related to the von Neumann entropies of the states $\rho_n$
as follows:
\be
\underline{S}(\hrho) \le \liminf_{n\rightarrow \infty} \frac{1}{n} S(\rho_n) \le \limsup_{n\rightarrow \infty} \frac{1}{n} S(\rho_n) \le \overline{S}(\hrho).
\ee

Moreover, for a sequence of product states $\hrho=\{\rho^{\otimes n}\}_{n=1}^\infty$:
\be
\underline{S}(\hrho)= \lim_{n\rightarrow \infty} \frac{1}{n} S(\rho_n)= \overline{S}(\hrho).
\ee


For sequences of bipartite states $\hrho = \{\rho_n^{AB}\}_{n=1}^\infty$,
with $\rho_n^{AB} \in {\cal{B}}\left(({\cal{H}}_A \otimes
{\cal{H}}_B)^{\otimes n}\right)$, the conditional spectral entropy rates
are defined as follows:
\bea
\overline{S}(A|B) &:=& -\underline{D}(\hrho^{AB}\| \hI^{A}\otimes \hrho^B)
\label{ocond};\\
\underline{S}(A|B) &:=& -\overline{D}(\hrho^{AB}\| \hI^{A}\otimes \hrho^B)
.
\label{ucond}
\eea
In the above,
$\hI^{A}=\{I^A_n\}_{n=1}^\infty$ and $\hrho^{A}=\{\rho^A_n\}_{n=1}^\infty$,
with $I^A_n$ being the identity operator acting in
${\cal{H}}_A^{\otimes n}$
and $\rho^A_n = \mathrm{Tr}_B \rho^{AB}_n$, the partial trace
being taken on the Hilbert space ${\cal{H}}_B^{\otimes n}$.

Similarly, the mutual information rates are given by
\bea
\overline{S}(A:B) &:=& \overline{D}(\hrho^{AB}\| \hrho^{A}\otimes \hrho^B)
\label{omutual};\\
\underline{S}(A|B) &:=& \underline{D}(\hrho^{AB}\| \hrho^{A}\otimes \hrho^B)
.
\label{umutual}
\eea
These spectral entropy rates have several interesting properties (see e.g.\cite{bd1})
and also have the operational significance of being related to the optimal
rates of protocols (see the discussion in the Introduction and the references
quoted there).

\section{Relation between spectral divergence rates and smooth min- and max- relative entropies}
\label{thms}
In this section we prove the relations between the spectral divergence rates
and the smooth relative entropies. As mentioned in the Introduction, the 
proofs are entirely self-contained, relying only on the definitions 
of the entropic quantities involved, and the lemmas stated in Section \ref{math}.
\subsection{Relation between $\overline{D}(\hrho|\hsigma)$ and the smooth max-relative entropy}
\begin{theorem}
\label{thm_divmax}
Given a sequence of bipartite states $\hrho=\{\rho_n\}_{n=1}^\infty$, and
a sequence of positive operators $\hsigma=\{\sigma_n\}_{n=1}^\infty$, where
$\rho_n, \sigma_n\in {\cal{B}}\bigl({\cal{H}}^{\otimes n} \bigr)$,
the sup-spectral divergence rate $\oD(\hrho\|\hsigma)$, defined by \reff{odiv} (or equivalently
by \reff{odiv2})
, satisfies
\be
\oD(\hrho\|\hsigma)= \lim_{\eps \rightarrow 0} \limsup_{n \rightarrow \infty}
\frac{1}{n} D^\eps_{\max}(\rho_n||\sigma_n),
\label{main4}
\ee
where $ D^\eps_{\max}(\rho_n||\sigma_n)$ is the smooth max-entropy of the state
$\rho_n$ of the sequence $\hrho$, and the operator $\sigma_n$ of the sequence 
$\hsigma$.
\end{theorem}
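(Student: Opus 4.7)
The plan is to prove the two inequalities $\oD(\hrho\|\hsigma)\ge\lim_{\eps\to 0}\limsup_n\tfrac{1}{n}D^\eps_{\max}(\rho_n\|\sigma_n)$ and $\oD(\hrho\|\hsigma)\le\lim_{\eps\to 0}\limsup_n\tfrac{1}{n}D^\eps_{\max}(\rho_n\|\sigma_n)$ separately, working throughout with the characterisation of $\oD$ given in Proposition \ref{equiv_sup}. A preliminary observation I will use repeatedly is that $D^\eps_{\max}(\rho\|\sigma)$ is non-increasing in $\eps$, since enlarging the ball $B^\eps(\rho)$ in Definition \ref{def:smoothentropies} can only shrink the infimum. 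Hence $\lim_{\eps\to 0^+}\limsup_n\tfrac{1}{n}D^\eps_{\max}(\rho_n\|\sigma_n)=\sup_{\eps>0}\limsup_n\tfrac{1}{n}D^\eps_{\max}(\rho_n\|\sigma_n)$, so for the lower bound it will be enough to exhibit a single $\eps>0$ that does the job, whereas for the upper bound the estimate must hold for every fixed $\eps>0$.

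For the upper bound I would fix any $\gamma>\oD(\hrho\|\hsigma)$, so that $\tr[\{\rho_n\ge 2^{n\gamma}\sigma_n\}\rho_n]\to 0$. I would then set $\eps_n:=\sqrt{8\tr[\{\rho_n>2^{n\gamma}\sigma_n\}\rho_n]}$, which also tends to zero, and invoke Lemma \ref{lem6n} with $\lambda=n\gamma$ to conclude $D^{\eps_n}_{\max}(\rho_n\|\sigma_n)\le n\gamma$. For any fixed $\eps>0$, eventually $\eps_n\le\eps$, and monotonicity in $\eps$ gives $\tfrac{1}{n}D^\eps_{\max}(\rho_n\|\sigma_n)\le\gamma$ for all large $n$. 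Taking $\limsup_n$, then $\lim_{\eps\to 0}$, then $\gamma\downarrow\oD(\hrho\|\hsigma)$ closes this half.

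For the lower bound I would fix $\gamma'<\gamma<\oD(\hrho\|\hsigma)$. By Proposition \ref{equiv_sup} there exist $\delta>0$ and a subsequence $\{n_k\}$ with $\tr(P_{n_k}\rho_{n_k})\ge\delta$, where $P_n:=\{\rho_n\ge 2^{n\gamma}\sigma_n\}$. I would pick $\eps\in(0,\delta)$ and suppose, for contradiction, that $D^\eps_{\max}(\rho_{n_k}\|\sigma_{n_k})\le n_k\gamma'$ for infinitely many $k$; choosing an approximate minimiser $\bar\rho_{n_k}\in B^\eps(\rho_{n_k})$ with an additive slack of $1$ gives $\bar\rho_{n_k}\le 2^{n_k\gamma'+1}\sigma_{n_k}$. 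Corollary \ref{cor1} applied with the projector $P_{n_k}$ then yields
\[
\tr(P_{n_k}\bar\rho_{n_k})\ge\tr(P_{n_k}\rho_{n_k})-\eps\ge\delta-\eps>0,
\]
while Lemma \ref{lem2} gives $\tr(P_{n_k}\sigma_{n_k})\le 2^{-n_k\gamma}$ and hence
\[
\tr(P_{n_k}\bar\rho_{n_k})\le 2^{n_k\gamma'+1}\tr(P_{n_k}\sigma_{n_k})\le 2\cdot 2^{-n_k(\gamma-\gamma')}\to 0,
\]
a contradiction. Thus $\tfrac{1}{n_k}D^\eps_{\max}(\rho_{n_k}\|\sigma_{n_k})>\gamma'$ along a subsequence, giving $\limsup_n\tfrac{1}{n}D^\eps_{\max}(\rho_n\|\sigma_n)\ge\gamma'$; letting $\gamma'\uparrow\gamma\uparrow\oD(\hrho\|\hsigma)$ finishes the argument.

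The main obstacle I expect is bookkeeping around the non-attainment of the infimum in Definition \ref{def:smoothentropies} (which forces the additive $+1$ slack in the lower bound, trivially absorbed in the asymptotic prefactor $2$) and around the correct interchange of the three limits $n\to\infty$, $\eps\to 0$ and $\gamma\to\oD$. Once the dependencies are kept straight — fixing $\gamma$ first, letting $n\to\infty$ so that $\eps_n\to 0$, and only then varying $\eps$ — the two inequalities assemble into \reff{main4}.
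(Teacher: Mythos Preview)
Your proposal is correct and follows essentially the same route as the paper: the upper bound is obtained from Lemma~\ref{lem6n} exactly as the paper does, and the lower bound uses the same ingredients (Corollary~\ref{cor1} to pass from $\rho_n$ to the near-minimiser $\bar\rho_n$, the operator inequality $\bar\rho_n\le 2^{\lambda}\sigma_n$, and Lemma~\ref{lem2} to bound $\tr[P_n\sigma_n]$). The only differences are organisational: you phrase the lower bound as a subsequence contradiction rather than the paper's direct estimate with an $n$-dependent threshold, and you handle the possible non-attainment of the infimum in Definition~\ref{def:smoothentropies} explicitly with the $+1$ slack, whereas the paper tacitly assumes a minimiser exists.
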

\begin{proof}

We first prove the bound
\be
\oD(\hrho||\hsigma) \ge \lim_{\epsilon \rightarrow 0}\limsup_{n \rightarrow \infty}
\frac{1}{n} D^\eps_{\max}(\rho_n||\sigma_n),
\label{main44}
\ee

Let $\delta > 0$ be arbitrary but fixed, and define
\begin{equation}
  \gamma := \oD(\hrho||\hsigma) +\delta.
\end{equation}
Then from Proposition \ref{equiv_sup} it follows that
\begin{equation}
  \limsup_{n \to \infty} \tr\bigl[\{\rho^{AB}_n \geq 2^{n \gamma}  \sigma_n^{AB}\} \rho_n^{AB} \bigr] = 0 \ .
\end{equation}
In particular, for any $\eps > 0$ there exists $n_0 \in \bbN$ such that for all $n \geq n_0$,
\bea
  \tr\bigl[\{\rho^{A B}_n > 2^{n \gamma}  \sigma_n^{AB}\} \rho_n^{AB} \bigr]
&\leq &
  \tr\bigl[\{\rho^{A B}_n \geq 2^{n \gamma}  \sigma_n^{AB}\} \rho_n^{AB} \bigr]\nonumber\\
&<&
  \frac{\eps^2}{8} \ .
\eea

Using Lemma~\ref{lem6n} we then infer that for all $n \geq n_0$
\begin{equation}
D_{\max}^\eps(\rho_n^{A B} ||  \sigma_n^{AB})
\leq
  n \gamma
\end{equation}
and, hence
\begin{equation}
  \limsup_{n \to \infty} \frac{1}{n} D_{\max}^\eps(\rho_n^{A B} ||  \sigma_n^{AB})
\leq
  \gamma \ .
\end{equation}
Since this holds for any $\eps > 0$, we conclude
\begin{equation}
  \lim_{\eps \to 0} \limsup_{n \to \infty} \frac{1}{n} D_{\max}^\eps(\rho_n^{A B} ||  \sigma_n^{AB})
\leq
  \gamma = \oD(\hrho||\hsigma) + \delta \ .
\end{equation}
The assertion \reff{main44} then follows because this holds for any arbitrary $\delta > 0$.


We next prove the bound
\be
\oD(\hrho||\sigma) \le \lim_{\epsilon \rightarrow 0}\limsup_{n \rightarrow \infty}
\frac{1}{n} D^\eps_{\max}(\rho_n||\sigma_n),
\label{main66}
\ee
Let $\ttrho_{n,\eps}$ be the operator for which
\be
D_{\max} (\ttrho_{n,\eps}||\sigma_n) = \inf_{\orho \in B^\eps(\rho_n)}D_{\max} (\orho||\rho_n) = D^\eps_{\max}(\rho_n||\sigma_n).
\label{orhocond}
\ee
This implies, in particular, that for any $\lambda$ for which
$\log \lambda \ge D^\eps_{\max}(\rho_n||\sigma_n)$, we have
\be \tr\bigl[ \{ \ttrho_{n,\eps} \ge \lambda \sigma_n\} (\ttrho_{n,\eps}
- \lambda \sigma_n)\bigr]=0.\label{cnd1}
\ee
For any real constant $\gamma>0$, let us define the projection operator
 \be P_n^\gamma := \{\rho_n\ge 2^{n\gamma} \sigma_n\}.
\ee
Note that
\bea
\tr\bigl[ P_n^\gamma \rho_n\bigr]
&=& \tr\bigl[ P_n^\gamma\ttrho_{n,\eps}\bigr]
+  \tr\bigl[ P_n^\gamma( \rho_n- \ttrho_{n,\eps})\bigr]
\nonumber\\
&\le&  \tr\bigl[P_n^\gamma (\ttrho_{n,\eps}
- 2^{n\alpha} \sigma_n)\bigr]+2^{n\alpha} \tr\bigl[ P_n^\gamma \sigma_n\bigr] \nonumber\\
&& \,\,+\tr\bigl[\{\rho_n\ge \ttrho_{n,\eps}\}( \rho_n- \ttrho_{n,\eps})\bigr]
\nonumber\\
&\le & \tr\bigl[ \{ \ttrho_{n,\eps} \ge 2^{n\alpha} \sigma_n\} (\ttrho_{n,\eps}
- 2^{n\alpha} \sigma_n)\bigr]\nonumber\\
&&\,\,+ 2^{n(\alpha - \gamma)} + \eps
\label{long4}
\eea
In the above we have made use of Lemma \ref{lem1}, Lemma \ref{lem2} and Corollary \ref{cor1}.

Let $\lambda := 2 ^{n\alpha}$ and choose $\log \lambda = D^\eps_{\max}(\rho_n||\sigma_n) + \delta/2$ for any arbitrary
$\delta >0$. Further let us choose $\gamma = \alpha + \delta/2$.
Then, by \reff{cnd1}, the first term on the right hand side
of \reff{long4} vanishes. Moreover, the second term also goes to zero as 
$n \tends \infty$. Therefore, for $n$ large enough and any $\delta^{'}>0$, in the limit $\eps \tends 0$,
we must have that
\be\tr(P_n^\gamma \rho_n) \le \delta^{'}.\ee
This together with Proposition \ref{equiv_sup} implies that $\gamma \ge \oD(\hrho||\hsigma)$. The required bound \reff{main66} 
then follows from the choice of the parameters $\alpha$ and $\gamma$.
\end{proof}

\subsection{Relation between $\underline{D}(\hrho|\hsigma)$ and the smooth min-relative entropy}
\begin{theorem}
\label{thm_divmin}
Given a sequence of bipartite states $\hrho=\{\rho_n\}_{n=1}^\infty$, and
a sequence of positive operators $\hsigma=\{\sigma_n\}_{n=1}^\infty$, where
$\rho_n, \sigma_n\in {\cal{B}}\bigl({\cal{H}}^{\otimes n} \bigr)$,
the inf-spectral divergence rate $\uD(\hrho\|\hsigma)$, defined by \reff{udiv} (or equivalently by
\reff{udiv2}), satisfies
\be
\uD(\hrho\|\hsigma)= \lim_{\eps \rightarrow 0} \liminf_{n \rightarrow \infty}
\frac{1}{n} D^\eps_{\min}(\rho_n||\sigma_n),
\label{main5}
\ee
where $ D^\eps_{\min}(\rho_n||\sigma_n)$ is the smooth min-relative entropy of the state
$\rho_n$ of the sequence $\hrho$ and the operator $\sigma_n$ of the sequence 
$\hsigma$.
\end{theorem}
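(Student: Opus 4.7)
The plan is to establish both inequalities $\uD(\hrho\|\hsigma) \le \lim_{\eps \to 0} \liminf_{n \to \infty} \frac{1}{n} D^\eps_{\min}(\rho_n\|\sigma_n)$ and its reverse, mimicking the two-direction strategy used in the proof of Theorem~\ref{thm_divmax} and relying on the same three ingredients: Lemmas~\ref{lem1} and~\ref{lem2}, Corollary~\ref{cor1}, and the gentle measurement lemma (Lemma~\ref{gm}).

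For the ``$\le$'' direction, I would fix $\alpha < \uD(\hrho\|\hsigma)$. By Proposition~\ref{equiv_inf}, for every $\delta > 0$ and all sufficiently large $n$, $\tr[P_n^\alpha \rho_n] \ge 1 - \delta$, where $P_n^\alpha := \{\rho_n \ge 2^{n\alpha}\sigma_n\}$. Setting $\tilde{\rho}_n := P_n^\alpha \rho_n P_n^\alpha$, Lemma~\ref{gm} yields $\|\tilde{\rho}_n - \rho_n\|_1 \le 2\sqrt{\delta}$, and $\tr(\tilde{\rho}_n) \le \tr(\rho_n)$ by construction, so $\tilde{\rho}_n \in B^{2\sqrt{\delta}}(\rho_n)$. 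Since $\supp\,\tilde{\rho}_n$ lies in the range of $P_n^\alpha$, we have $\pi_{\tilde{\rho}_n} \le P_n^\alpha$, and Lemma~\ref{lem2} gives $\tr[\pi_{\tilde{\rho}_n}\sigma_n] \le \tr[P_n^\alpha \sigma_n] \le 2^{-n\alpha}$. Therefore $D^\eps_{\min}(\rho_n\|\sigma_n) \ge D_{\min}(\tilde{\rho}_n\|\sigma_n) \ge n\alpha$ whenever $\eps \ge 2\sqrt{\delta}$, and letting $n \to \infty$, then $\eps \to 0$ with $\delta = \eps^2/4$, and finally $\alpha \nearrow \uD(\hrho\|\hsigma)$, finishes this half.

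For the reverse inequality, I would fix $\gamma > \uD(\hrho\|\hsigma)$. By Proposition~\ref{equiv_inf}, $\liminf_n \tr[P_n^\gamma \rho_n] < 1$, so there exist a subsequence $\{n_k\}$ and $\eta > 0$ with $\tr[P_{n_k}^\gamma \rho_{n_k}] \le 1 - \eta$. For any $\bar{\rho} \in B^\eps(\rho_n)$, applying Lemma~\ref{lem1} with $P = \pi_{\bar{\rho}}$, $A = \rho_n$, $B = 2^{n\gamma}\sigma_n$ and discarding the non-negative term $2^{n\gamma}\tr[P_n^\gamma\sigma_n]$ yields
\[ 2^{n\gamma}\tr[\pi_{\bar{\rho}}\sigma_n] \ge \tr[\pi_{\bar{\rho}}\rho_n] - \tr[P_n^\gamma \rho_n]. \]
Since $\|\bar{\rho} - \rho_n\|_1 \le \eps$ forces $\tr(\bar{\rho}) \ge 1 - \eps$, Corollary~\ref{cor1} then gives $\tr[\pi_{\bar{\rho}}\rho_n] \ge \tr(\bar{\rho}) - \eps \ge 1 - 2\eps$, so along the subsequence and for $\eps < \eta/2$ one obtains $\tr[\pi_{\bar{\rho}}\sigma_{n_k}] \ge 2^{-n_k\gamma}(\eta - 2\eps)$ \emph{uniformly} in $\bar{\rho}$. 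Hence $D^\eps_{\min}(\rho_{n_k}\|\sigma_{n_k}) \le n_k\gamma - \log(\eta - 2\eps)$, which forces $\liminf_n \frac{1}{n} D^\eps_{\min}(\rho_n\|\sigma_n) \le \gamma$; letting $\eps \to 0$ and then $\gamma \searrow \uD(\hrho\|\hsigma)$ closes the argument.

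I expect the main obstacle to be this second direction. Unlike in Theorem~\ref{thm_divmax}, where Lemma~\ref{lem6n} bounds $D^\eps_{\max}$ directly through a single spectral-projection trace, here one has to lower-bound $\tr[\pi_{\bar{\rho}}\sigma_n]$ uniformly over all $\bar{\rho}$ in the smoothing ball---a quantity whose dependence on $\bar{\rho}$ enters only implicitly through its support projection. The clean way around this is the application of Lemma~\ref{lem1} with $P = \pi_{\bar{\rho}}$ to the operator difference $\rho_n - 2^{n\gamma}\sigma_n$, which reduces the required uniform estimate to two standard ingredients: $\tr[\pi_{\bar{\rho}}\rho_n]$ is pinned down by the smoothing constraint via Corollary~\ref{cor1}, and $\tr[P_n^\gamma\rho_n]$ is pinned down by the failure of the defining condition of $\uD$ along a subsequence.
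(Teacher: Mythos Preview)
Your proof is correct and follows essentially the same route as the paper: the first direction is identical (gentle measurement plus Lemma~\ref{lem2}), and the second hinges on the same key application of Lemma~\ref{lem1} with $P=\pi_{\bar\rho}$ to the difference $\rho_n-2^{n\gamma}\sigma_n$, together with Corollary~\ref{cor1} to control $\tr[\pi_{\bar\rho}\rho_n]$. The only difference is cosmetic: the paper phrases the second direction as a contradiction argument through the optimizer $\tilde\rho_{n,\eps}$, whereas you argue directly and uniformly over $B^\eps(\rho_n)$ along an explicit subsequence---which is arguably cleaner.
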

\begin{proof}
From the definition \reff{udiv2} of $\uD(\hrho\|\hsigma)$ it follows that for any $\gamma \le \uD(\hrho\|\hsigma)$ and any $\delta >0$, for $n$ large enough
\be\tr \big[P_n^\gamma  \rho_n\big] > 1 - \delta,\label{one3} \ee
where $P_n^\gamma := \{ \rho_n \ge 2^{n\gamma} \sigma_n\}$.

 For {{any}} given $\alpha >0$, choose $\gamma= \uD(\hrho\|\hsigma) - \alpha$, and let
\be {{\ttrho}}_{n,\gamma} := P_n^\gamma  \rho_n P_n^\gamma
\label{otrho2}
\ee
Then using (\ref{one3})
and the ``Gentle measurement lemma'', Lemma \ref{gm}, we infer that, for $n$ large enough,
${{\ttrho}}_{n,\gamma} \in B^\eps(\rho_n)$ with
$\eps = 2{\sqrt{\delta}}$. Let ${\pi}_{n,\gamma} $ denote the projection
onto the support of ${{\ttrho}}_{n,\gamma} $.

We first prove bound
\be
\lim_{\eps \rightarrow 0} \liminf_{n \rightarrow \infty}
\frac{1}{n} D^\eps_{\min}(\rho_n||\sigma_n)\ge \uD(\hrho\|\hsigma).
\label{uf1}
\ee

For $n$ large enough,
\bea
D_{\min}^\eps( \rho_n||\sigma_n)
&=& \sup_{\orho_n\in B^\eps(\rho_n)}
D_{\min}(\orho_n||\sigma_n)\nonumber\\
&\ge & D_{\min}(\ttrho_{n,\gamma}||\sigma_n)\nonumber\\
&=& - \log \tr\bigl( \pi_{n,\gamma}\sigma_n)\nonumber\\
&\ge & - \log \tr\bigl( P_{n,\gamma}\sigma_n) \ge n\gamma.
\label{new1}
\eea
The last inequality in \reff{new1} follows from Lemma \ref{lem2}. Hence, for $n$ large
enough,
\be
\frac{1}{n} D_{\min}^\eps( \rho_n||\sigma_n)
\ge \gamma = \uD(\hrho||\hsigma) - \alpha,
\ee
and since $\alpha$ is arbitrary, we obtain the desired bound \reff{uf1}.

To complete the proof of Theorem \ref{thm_divmin}, we assume that
\be
\lim_{\eps \rightarrow 0} \liminf_{n \rightarrow \infty}
\frac{1}{n} D_{\min}^\eps( \rho_n||\sigma_n)>\uD(\hrho||\hsigma),
\label{ass2}
\ee
and prove that this leads to a contradiction.

Let $\ttrho_{n,\eps}$ be the operator for which
\be
D_{\min}^\eps( \rho_n||\sigma_n)= D_{\min}( \ttrho_{n,\eps}||\sigma_n)
= - \log \tr\bigl(\tpi_{n,\epsilon}\sigma_n \bigr),
\ee
where $\tpi_{n,\eps}$ is the projection onto the support of
$\ttrho_{n,\eps}$.

Note that
\bea
\tr(\tpi_{n,\eps}  \rho_n) &=&
\tr \bigl[ \tpi_{n,\eps}\bigl((\rho_n - \ttrho_{n,\eps})
+ \ttrho_{n,\eps}\bigr)\bigr]\nonumber\\
&=& \tr \bigl[\tpi_{n,\eps}(\rho_n - \ttrho_{n,\eps})\bigr] + \tr \ttrho_{n,\eps}\nonumber\\
&\ge& \tr \bigl[ \{\rho_n \le \ttrho_{n,\eps}\} (\rho_n - \ttrho_{n,\eps})\bigr] + \tr \bigl[\ttrho_{n,\eps}\bigr]\nonumber\\
&\ge & - \eps + 1 - \eps = 1- 2\eps.
\label{new2}
\eea
We arrive at the second last line of \reff{new2} using Lemma \ref{lem1}. 
The last inequality is obtained by using the fact that $\ttrho_{n,\eps} \in B^\eps(\rho_n)$, and the bound
$$\tr \bigl[ \{\rho_n \le \ttrho_{n,\eps}\} (\rho_n - \ttrho_{n,\eps})\bigr] \ge - \eps,$$
which arises from the fact that $\ttrho_{n,\eps} \in B^\eps(\rho_n)$.

Define,
\bea
  \beta_\eps &:=&  \liminf_{n \to \infty}\Bigl[ \frac{1}{n} 
D_{\min}( \ttrho_{n,\eps}||\sigma_n)\Bigr]\nonumber\\
&:=&  \liminf_{n \to \infty}\Bigl[- \frac{1}{n} \log \tr\bigl(\tpi_{n,\epsilon}\sigma_n \bigr)\Bigr]\nonumber\\ 
\eea
and $$\gamma := \lim_{\eps \rightarrow 0} \beta_\eps.$$
Obviously, $\beta_\eps \ge \gamma$.
Note that the assumption \reff{ass2} is equivalent to the 
assumption $\gamma >\uD(\hrho||\hsigma). $
Let $\gamma_0$ be such that 
\be
\beta_\eps > \gamma_0 > \uD(\hrho||\hsigma).
\ee 
Then, for any fixed
$\eps > 0$ there exists $n_0 \in \mathbb{N}$ such that for all $n \geq
n_0$
\be
\frac{1}{n} D_{\min}( \ttrho_{n,\eps}||\sigma_n) \ge \beta_\eps.
\ee
The above inequality can be rewritten as
\be
\tr\bigl(\tpi_{n,\epsilon}\sigma_n \bigr) \leq 2^{-n\beta_\eps} 
\label{use}
\ee

Using \reff{use} we obtain the following:
\bea
\tr(\tpi_{n,\eps} \rho_n)  &=& \tr\bigl[\tpi_{n,\eps} (\rho_n - 2^{n\gamma_0}\sigma_n)\bigr] 
  + 2^{n\gamma_0} \tr \bigl[\tpi_{n,\eps}\sigma_n\bigr]\nonumber\\
&\le & \tr\bigl[\{\rho_n \ge 2^{n\gamma_0} \sigma_n\}(\rho_n - 2^{n\gamma_0} \sigma_n) \bigr]
 +2^{n\gamma_0} 2^{-n \beta_\eps}\nonumber\\
&\le &  \tr\bigl[\{\rho_n \ge 2^{n\gamma_0} \sigma_n\}(\rho_n - 2^{n\gamma_0} \sigma_n) \bigr]
 + 2^{-n(\beta_\eps - \gamma_0)}\nonumber\\
\label{upb7}
\eea
The second term on the right hand side of
\reff{upb7} tends to zero asymptotically in $n$, since $\delta_\eps>0$. However, the first term does not tend to $1$,
since $\gamma_0 >\uD(\hrho||\hsigma)$ by assumption. Hence we obtain the bound
\be
\tr(\tpi_{n,\eps} \rho_n) < 1 - c_0,
\label{compress2}
\ee
for some constant $c_0>0$. This contradicts \reff{new2} in the limit $\eps \rightarrow 0$.
\end{proof}

\section{Appendix}
\label{lempfs} 
\noindent
{\bf{Proof of Lemma \ref{lem5n}}}

\begin{proof} Define
  \begin{align*}
  \alpha_{A B} & := 2^{\lambda} \cdot \sigma_{AB} \\
  \beta_{A B} & := 2^{\lambda} \cdot \sigma_{AB} + \Delta_{A B} \ .
  \end{align*}
 and
  \[
    T_{A B} := \alpha_{A B}^\half \beta_{A B}^{-\half} \ .
  \]
  Let $\ket{\Psi} = \ket{\Psi}_{A B R}$ be a purification of $\rho_{A B}$ and let $\ket{\Psi'} := T_{A B} \otimes \id_R \ket{\Psi}$ and $\rho'_{A B} := \tr_R(\proj{\Psi'})$.

Note that
\begin{align*}
  \rho'_{A B}
& =
  T_{A B} \rho_{A B} T_{A B}^{\dagger} \\
& \leq
  T_{A B} \beta_{A B} T_{A B}^{\dagger} \\
& =
  \alpha_{A B}
=
  2^{\lambda} \cdot \sigma_{AB} \ ,
\end{align*}
which implies $D_{\max}(\rho'_{A B}\|\sigma_{AB}) \le \lambda$. It thus remains to be shown that
\begin{equation} \label{eq:distbound}
  \| \rho_{A B} - \rho'_{A B} \|_1
\leq
  \sqrt{8 \tr(\Delta_{A B})}
 \ .
\end{equation}

  We first show that the Hermitian operator
  \[
    \bar{T}_{A B} := \frac{1}{2} (T_{A B} + T_{A B}^\dagger) \ .
  \]
  satisfies
  \begin{equation} \label{eq:Tleqid}
    \bar{T}_{A B} \leq \id_{A B} \ .
 \end{equation}
 For any vector $\ket{\phi} = \ket{\phi}_{A B}$,
 \begin{align*}
   \| T_{A B} \ket{\phi} \|^2
 & =
   \bra{\phi} T_{A B}^\dagger T_{A B} \ket{\phi}
=
   \bra{\phi} \beta_{A B}^{-\half} \alpha_{A B} \beta_{A B}^{-\half} \ket{\phi} \\
 & \leq
   \bra{\phi} \beta_{A B}^{-\half} \beta_{A B} \beta_{A B}^{-\half} \ket{\phi}
 =
   \| \ket{\phi} \|^2
  \end{align*}
 where the inequality follows from $\alpha_{A B} \leq \beta_{A B}$.
Hence, for any vector $\ket{\phi}$,
 \begin{align*}
  \| \bar{T}_{A B} \ket{\phi} \|
& \leq
   \frac{1}{2} \| T_{A B} \ket{\phi} + T_{A B}^{\dagger} \ket{\phi} \| \\
 & \leq
   \frac{1}{2} \| T_{A B} \ket{\phi} \| + \frac{1}{2} \| T_{A B}^{\dagger} \ket{\phi} \|
 \leq
   \| \ket{\phi} \| \ ,
 \end{align*}
 which implies~\eqref{eq:Tleqid}.

  We now determine the overlap between $\ket{\Psi}$ and $\ket{\Psi'}$,  \begin{align*}
    \spr{\Psi}{\Psi'}
   & =
    \bra{\Psi}  T_{A B} \otimes \id_R \ket{\Psi} \\
   & =
    \tr(\proj{\Psi} T_{A B} \otimes \id_R)
   =
    \tr(\rho_{A B} T_{A B}) \ .
 \end{align*}
 Because $\rho_{A B}$ has trace one, we have
\begin{align*}
    1 - |\spr{\Psi}{\Psi'}|
  & \leq
    1- \Re \spr{\Psi}{\Psi'}
  =
    \tr\bigl(\rho_{A B} (\id_{A B} - \bar{T}_{A B}) \bigr) \\
   & \leq
     \tr\bigl(\beta_{A B}  (\id_{A B} - \bar{T}_{A B})\bigr) \\
   & =
     \tr(\beta_{A B}) - \tr(\alpha_{A B}^{\half} \beta_{A B}^{\half}) \\
   & \leq
     \tr(\beta_{A B}) - \tr(\alpha_{A B})
   =
     \tr(\Delta_{A B}) \ .
  \end{align*}
  Here, the second inequality follows from the fact that, because of~\eqref{eq:Tleqid}, the operator $\id_{AB} - \bar{T}_{A B}$ is positive, and $\rho_{A B} \leq \beta_{A B}$. The last inequality holds because $\alpha_{A B}^{\half} \leq \beta_{A B}^{\half}$, which is a consequence of the operator monotonicity of the square root (Proposition V.1.8 of \cite{bhatia}).

Using \reff{fidelity} and the fact that the fidelity between two pure states is given by their overlap, we find
\begin{align*}
  \| \proj{\Psi} - \proj{\Psi'} \|_1
& \leq
  2 \sqrt{2(1-| \spr{\Psi}{\Psi'} |)} \\
& \leq
  2 \sqrt{2 \tr(\Delta_{A B})}
\leq
  \eps \ .
\end{align*}
Inequality~\eqref{eq:distbound} then follows because the trace distance can only decrease when taking the partial trace.
\end{proof}

\noindent
{\bf{Proof of Lemma \ref{lem6n}}}

\begin{proof}
  Let $\Delta^+_{A B}$ and $\Delta^-_{A B}$ be mutually orthogonal positive operators such that
  \[
    \Delta^+_{A B} - \Delta^-_{A B} = \rho_{A B} - 2^{\lambda} \sigma_{AB} \ .
  \]
  Furthermore, let $P_{A B}$ be the projector onto the support of $\Delta^+_{A B}$, i.e.,
  \[
    P_{A B} = \{\rho_{A B} > 2^{\lambda} \sigma_{AB} \} \ .
  \]
  We then have
  \begin{align*}
    P_{A B} \rho_{A B}  P_{A B}
  & =
    P_{A B} (\Delta^+_{A B} + 2^{\lambda} \sigma_{AB} - \Delta^-_{A B}) P_{A B} \\
  & \ge 
    \Delta^{+}_{A B}
  \end{align*}
  and, hence,
  \[
    \sqrt{8 \tr(\Delta^{+}_{A B})}
  \le
    \sqrt{8 \tr(P_{A B} \rho_{AB})\bigr)} = \eps \ .
  \]
  The assertion now follows from Lemma~\ref{lem5n} because
  \[
    \rho_{A B} \leq 2^{\lambda} \sigma_{AB} + \Delta^+_{A B} \ .
  \]
\end{proof}

\section*{Acknowledgements} The author is very grateful to Milan Mosonyi for
carefully reading the paper and pointing out an error in the first version. She
would also like to thank Reinhard Werner and Tony Dorlas for helpful suggestions.  

\section*{Biogragraphy}
\medskip

\noindent
Nilanjana Datta received a Ph.D. degree from ETH Zurich,
Switzerland, in 1996. From 1997 to 2000, she was a 
postdoctoral researcher at the Dublin Institute of Advanced 
Studies, C.N.R.S. Marseille, and EPFL in Lausanne. In 2001
she joined the University of Cambridge, as a Lecturer in
Mathematics of Pembroke College, and a member of the Statistical
Laboratory, in the Centre for Mathematical Sciences. She is 
currently an Affiliated Lecturer of the Faculty of Mathematics,
University of Cambridge, and a Fellow of Pembroke College.

\end{document}